\newcommand*{\etc}{
    \@ifnextchar{.}
        {etc}
        {etc.\@\xspace}
}
\newcommand{\refsec}[1]{$\S$\ref{sec:#1}}
\newcommand{\etal}[0]{\textit{et al.} }
\begin{document}

\newtheorem{thm}{Theorem}
\newtheorem{lma}{Lemma}
\newtheorem{defi}{Definition}
\newtheorem{proper}{Property}
\newtheorem{propo}{Proposition}

\title{A Multiple-Entanglement Routing Framework for Quantum Networks}

\author{Tu N. Nguyen,
        Kashyab J. Ambarani,
        Linh Le, Ivan Djordjevic, and Zhi-Li Zhang
\thanks{T. N. Nguyen and L. Le  are with Kennesaw State University, Marietta, GA 30060, USA.}
\thanks{K. J. Ambarani is with the Department of Computer Science, Purdue University Fort Wayne, IN 46805 USA.}
\thanks{I. Djordjevic is with the Department of Electrical and Computer Engineering, University of Arizona, Tucson, AZ 85721, USA.}
\thanks{Z.-L. Zhang is with the Department of Computer Science \& Engineering, University of Minnesota, Minneapolis, MN 55455, USA.}
\thanks{Corresponding author: Tu N. Nguyen (e-mail: tu.nguyen@kennesaw.edu).}

}

\IEEEcompsoctitleabstractindextext{%
\begin{abstract}

Quantum networks are gaining momentum in finding applications in a wide range of domains.
However, little research has investigated the potential of a quantum network framework to enable highly reliable communications. The goal of this work is to investigate and design the \textit{multiple entanglement routing} framework, namely $k$-entangled routing. In particular, the $k$-entangled routing will enable $k$ paths connecting all demands (source-destination pairs) in the network.
To design the $k$-entangled routing, we propose two algorithms that are called Sequential Multi-path Scheduling Algorithm and Min-Cut-based Multi-path Scheduling Algorithm.
In addition, we evaluate the performance of the proposed algorithms and models through a realistic quantum network simulator, NetSquid, \cite{netsquid,Coopmans2021} that models the stochastic processes underlying quantum communications. The results show that the proposed algorithms (SMPSA and MCSA) largely enhance the network's traffic flexibility. The proposed paradigms would lay the foundation for further research on the area of entanglement routing.

\end{abstract}
\begin{IEEEkeywords}

Quantum network, quantum routing, entanglement routing, quantum swapping, and $k$-entangled routing.
\end{IEEEkeywords}}

\maketitle
\IEEEdisplaynotcompsoctitleabstractindextext
\IEEEpeerreviewmaketitle

\section{Introduction} \label{sec:intro}

With ever more devices connected to the Internet and new services created, privacy and security concerns are growing rapidly along with the development of network technologies.
To address the network security issues, quantum communication serves as one of the most promising technologies for the forthcoming quantum revolution \cite{Kimble2008, Kozlowski2019,Chung2021IllinoisEQ}. The novel networking paradigm has already given rise to a variety of new applications, which are provably impossible to construct using the classical communication which powers today's internet. Quantum key distribution (QKD) \cite{Lo2014, Biham2006} $-$ one of the most famous examples $-$ promises to achieve unconditional security in communication using the laws of quantum mechanics. Along with QKD, applications such as quantum cloud computing \cite{Castelvecchi2017, Xin2018} and clock synchronization \cite{Ilo-Okeke2018, Ben2011} are also expected to become widespread with the adoption of quantum networks. Central to the aforementioned applications is that a quantum network allows for the transmission of quantum bits (qubits). The qubits possess several unique properties as compared to classical bits, which make them well-suited candidates for security applications \cite{Lindblad1999, Guo2001}. However, these unique properties also give rise to the challenge of long-distance communications within quantum networks \cite{Gisin2007}. The bottleneck of distant communication occurs due to the exponential increase in entanglement failure on increasing the physical distance between two respective quantum nodes \cite{Briegel1998}. To tackle the issue of long distance entanglement generation, quantum repeaters, i.e., devices implementing the entanglement swapping protocol, were developed  \cite{Briegel1998}. Yet, with the advent of quantum repeaters, the need for efficient and scalable protocols for long distance quantum communication (within the network layer of the quantum network stack \cite{Dahlberg2019}) has further intensified. On account of the growing need for such protocols, several studies have investigated the various challenges within the network layer \cite{Das2018, Gyongyosi2019, Kozlowski2020, Dai2020}. This paper focuses particularly on the problem of entanglement routing. The objective of entanglement routing is to establish long distance entanglement over multiple hops of quantum repeaters through entanglement swapping \cite{Chakraborty2019}. Entanglement routing serves as the basis for the successful operation of the network layer, and presents exceptional challenges due to the convoluted and stochastic nature of the physical phenomena underlying entanglement \cite{Caleffi2017}. Extensive research efforts have examined the entanglement routing problem under a variety of scenarios \cite{Caleffi2017, Pant2019, Shi2020, Chakraborty2020, Cicconetti2021}.

\textbf{Motivation.} Despite the abundance of entanglement routing algorithms proposed via recent works, to our knowledge there are no current studies which investigate the $k$-entangled routing problem. The author in \cite{Caleffi2017} analyzes the construction of an optimal routing metric, taking into account the various complex physical mechanisms underlying entanglement such as the attenuation length of optical fiber, the atomic bell state measurement efficiency, and the atom pulse duration. However, the proposed routing scheme experiences scalability issues over arbitrary topologies due to the time complexity. Pant \etal \cite{Pant2019} discusses solutions for multi-path entanglement routing within grid networks, Das \etal \cite{Das2018} studies distinct topologies applicable for entanglement routing, and Pirandola \etal \cite{Pirandola2017} constructs a multi-path routing paradigm for diamond networks. Clearly, these studies simplify the complexity of the routing problem by only considering \textit{fixed} network topologies. In addition, despite the pragmatic offline and online algorithm proposed by Shi \etal \cite{Shi2020}, the demand scheduling aspect of multi-path routing so as to maximize resource (qubit) utilization is not considered. Furthermore, Cicconetti \etal \cite{Cicconetti2021} examines the demand scheduling aspect of entanglement routing using various queuing policies. However, only single path routing is enabled via the proposed algorithms.

\textbf{Our contributions.} Going beyond the shortcomings of previous works,
this work contributes to the development of quantum networks, especially aspects of entanglement routing. It implements a key scientific principle of entanglement routing that incorporates design, analysis, and intervention and outcome measurements of the $k$-entangled routing framework in a tightly coupled and dynamic loop using realistic topologies and traffic metrics, with the goal of improving the traffic flexibility. Specifically, this paper has the following intellectual merits and contributions:

\begin{itemize}
    \item \textit{Formulation.} We mathematically formulate the $k$-entangled routing problem through an optimization framework, and briefly describe the rationale, practical implications, and hardness of the problem.
    \item \textit{Design.} We propose two novel multi-path scheduling algorithms to address the $k$-entangled routing problem, each providing distinct optimality and time complexity. Particularly, we present (i) Sequential Multi-Path Scheduling Algorithm (SMPSA) to ensure fair sharing of resources among incoming demands, and (ii) Min-Cut-based Multi-Path Scheduling Algorithm (MCSA) which provides greater efficiency and prioritizes resources towards demands with greater flexibility constraints. Moreover, we conduct rigorous theoretical analysis to better characterize the performance of the proposed paradigms.
    \item \textit{Analysis and Assessment.} We perform experiments using the NetSquid simulator \cite{Coopmans2021} to setup realistic quantum network environments and better model the stochastic processes underlying entanglement routing. Furthermore, these results are incorporated into meticulously designed simulations to compare the proposed algorithms in terms of their (i) scheduling efficiency, (ii) path selection efficiency, and (iii) resource utilization efficiency.
\end{itemize}

\textbf{Organization.} The paper is organized as follows.
In $\S$\ref{sec:primary}, we present the network model including the main components used to construct a quantum network, along with the basic notations and preliminary concepts. Furthermore, the research problem definition, associated formulation, and discussion of hardness is also outlined within this section.
We present the multi-path scheduling algorithms, theoretical analysis, and experiments to better model the efficiency of the proposed algorithms in $\S$\ref{sec:algos}. In $\S$\ref{sec:eval}, we evaluate the performance of the proposed algorithms.
Finally, we make key concluding remarks and provide possible future research directions in $\S$\ref{sec:conclusion}.

\section{Network Model and Research Problem} \label{sec:primary}

In \refsec{components}, we provide a background of the quantum network architecture, describing the main components and their functionalities. Since the architecture of quantum networks is \textit{bold} and \textit{ambitious} $-$ its
design and development will require multi-year efforts $-$ we then present the general topology, basic mathematical notations, and fundamental concepts used throughout the paper in \refsec{topology}. In \refsec{problem}, we present our vision for the proposed $k$-entangled routing framework and provide an overview of the research problem, together with an understanding of the hardness of the proposed research problem and a number of key innovative abstractions for realizing the envisaged $k$-entangled routing problem.

\subsection{Quantum Network: Main Components} \label{sec:components}

We consider a quantum network that is built on top of the underlying conventional network system. The system model presented in this paper follows the architecture of the existing physical experiments and studies \cite{Pant2019} to illustrate a practical quantum network. A quantum network is constituted by the following critical components:

\textit{1) Quantum computer} that is connected to the other quantum computers via quantum channels to establish end-to-end entanglement and form a quantum network. Each quantum computer is equipped with a quantum processor, memory, and communication qubits, and is capable of performing quantum entanglement and teleportation. Furthermore, all quantum computers are connected to the classical Internet, and are therefore able to interchange classical information as well \cite{Kozlowski2020}.

\textit{2) Quantum repeater} has been experimentally demonstrated as one of the most promising technologies towards forming long-distance entanglements
\footnote{A long-distance entanglement decays exponentially with the physical distance between the two entangled nodes \cite{Yin2017}. Quantum teleportation over a distance of 143 km has been deployed but still in the early stage.} between two quantum computers \cite{Munro2015}. To enable long-distance quantum communication, quantum repeaters execute the entanglement-swapping and entanglement-purification protocols \cite{Behera2019}. We remark that every quantum computer includes capabilities of a quantum repeater, and thereafter we refer to both of them as \textit{quantum nodes}.

\textit{3) Quantum channel} is initialized when the entanglement between two quantum nodes is established. In particular, a quantum network uses conventional optical fiber to transmit the \textit{correlation} of qubits during the entanglement establishment process, upon which a quantum link is set on the respective quantum channel.
Furthermore, we provide the architecture of a quantum network as illustrated in Fig. \ref{fig:net-architecture}.

\begin{figure}[t]
\center
\subfigure{\includegraphics[width=8cm]{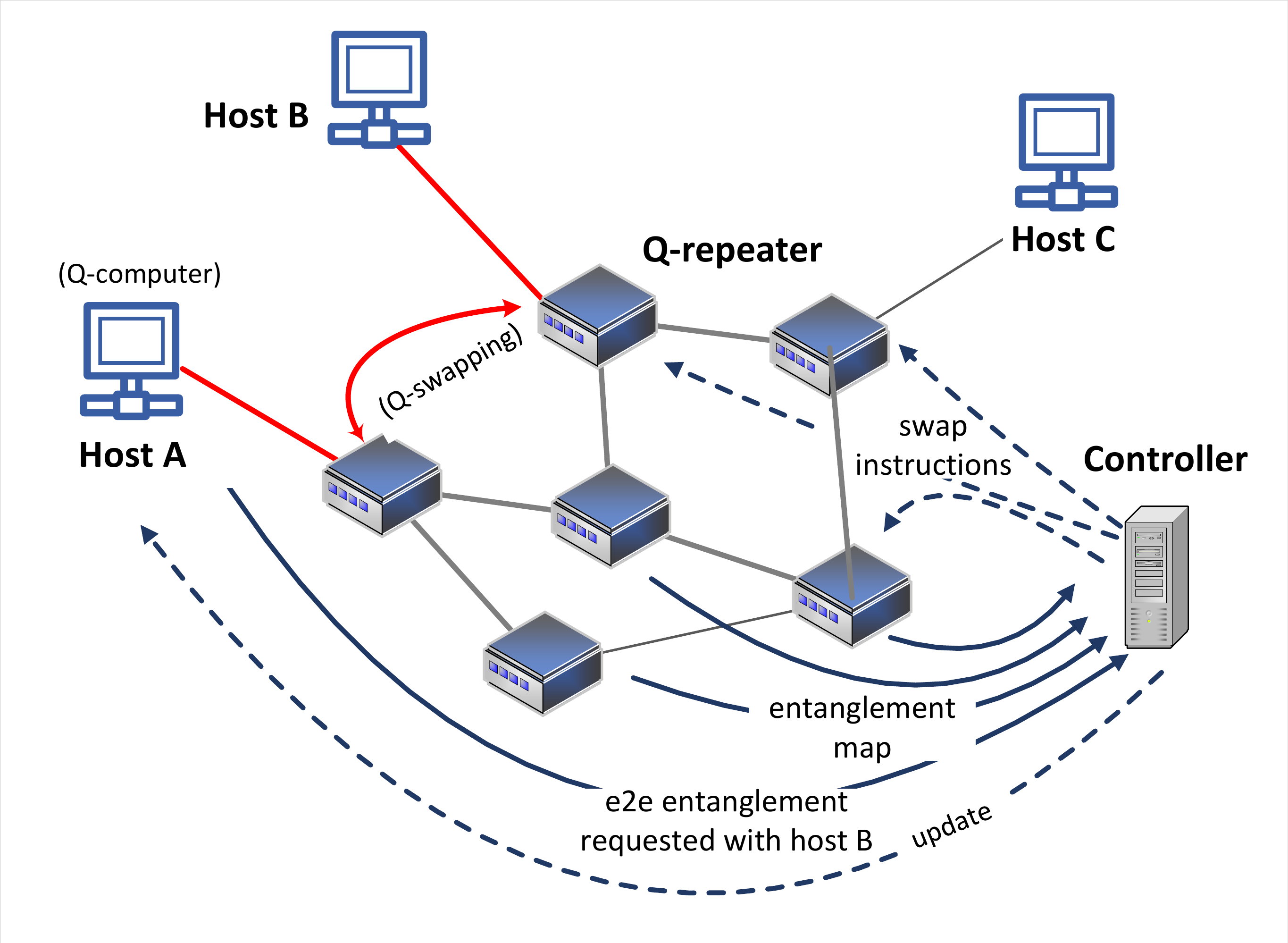}}
\caption{Quantum Network Architecture.}
\label{fig:net-architecture}
\end{figure}

\subsection{Topology, Basic Notation, and Fundamental Concept} \label{sec:topology}

We adopt a general model representing a quantum network with a set $\mathcal{N}$ of quantum nodes (i.e., quantum computers and repeaters). Each quantum node $u \in \mathcal{N}$ has a finite number of qubits $\mathcal{C}_u$ (i.e., memory and communication qubits).
The network traffic consists of a set $\mathcal{D}$ of demands. In particular, each demand reflects a source-destination pair. We denote $\mathcal{L}$ as the set containing links connecting quantum nodes. The links in the initial state reflect physical connections using fiber cables between quantum nodes as the quantum network that is built on top of the underlying conventional network system.

\begin{figure}[t]
\center
\subfigure{\includegraphics[width=8cm]{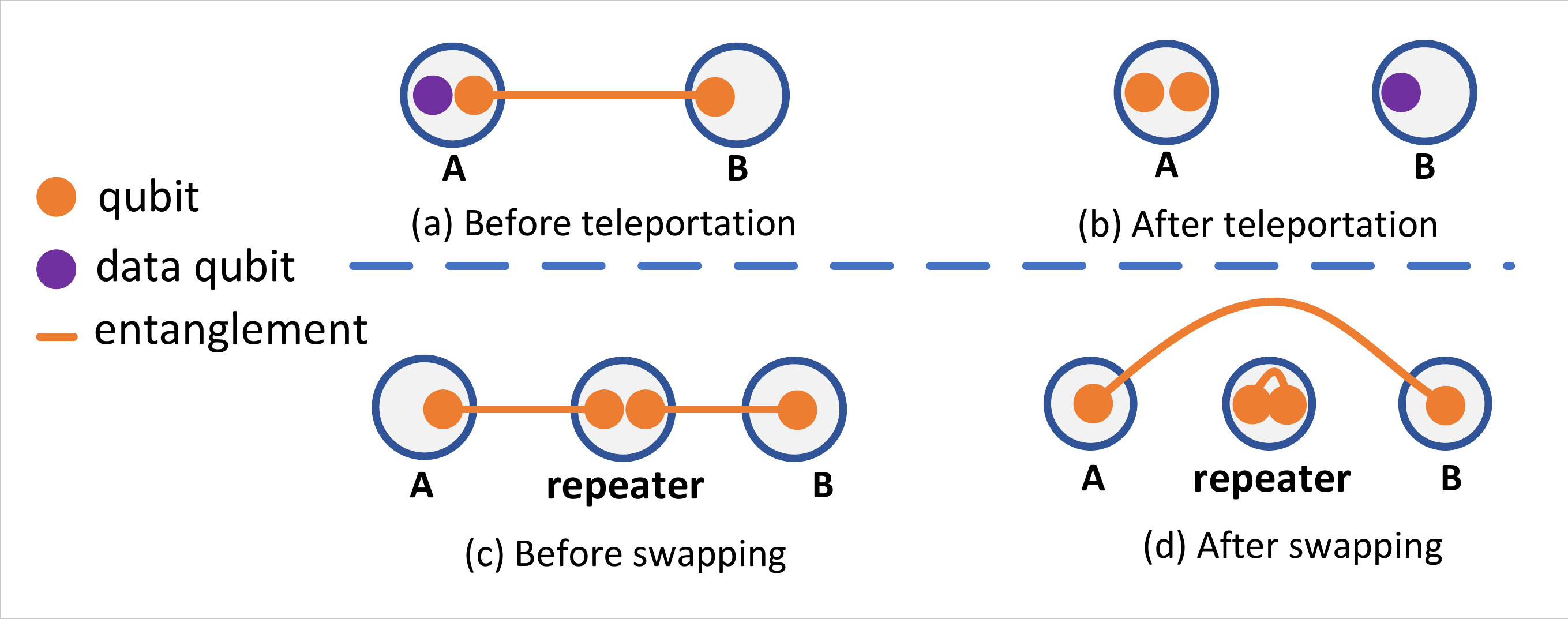}}
\caption{Transmission of qubit using teleportation in (a-b) and entanglement swapping in (c-d).}
\label{fig:tele-swap}
\end{figure}

\noindent\textit{1) Quantum entanglement and teleportation.} It is worth mentioning quantum entanglement as the heart of the quantum teleportation. Quantum entanglement offers strong connections between two or more particles (e.g., two photons) based on their correlation. In other words, quantum states of the two entangled particles remain linked to each other. Hence, if a pair of entangled qubits is distributed to two nodes,
one node can \textit{teleport} one qubit of information to the other using the Bell-state-measurement (BSM) \cite{Lutkenhaus1999} and a classical communication channel, this process is known as \text{quantum teleportation}  \cite{Bouwmeester1997}.
The states of two particles before and after the teleportation process are illustrated in Fig. \ref{fig:tele-swap}(a) and \ref{fig:tele-swap}(b), respectively.

\noindent\textit{2) Entanglement swapping.} Quantum repeaters relay entanglements to enable long-distance quantum communication using entanglement swapping \cite{Munro2015}, similar to the store-and-forward process in classical networks. A quantum repeater that holds the entanglements of two nodes can execute swapping to turn an \textit{indirect entanglement} to a \textit{direct entanglement}. Fig. \ref{fig:tele-swap}(c) and \ref{fig:tele-swap}(d) illustrate the entanglement swapping procedure.

\noindent\textit{3) $k$-entangled routing.} Key distribution-based applications in quantum networks require multiple-entangled routing paths due to the accelerated decay of entanglement. A pair of source and destination are said to be enabled with $k$-entangled routing if
there are at least $k$ paths\footnote{Paths are always disjoint because each entangled link always belongs to only a single path.} connecting them.

\subsection{$k$-Entangled Routing Problem and Illustration} \label{sec:problem}

In this section, we outline and mathematically formulate the $k$-entangled routing problem, which will lay the foundation for resilient entanglement routing in quantum networks.

\noindent\textbf{INSTANCE:} Suppose we are given a quantum network with a set $\mathcal{N}$ of quantum nodes in which each node $u \in \mathcal{N}$ has a finite number of qubits $\mathcal{C}_u$, the set of incoming demands $\mathcal{D}$, and a set $\mathcal{L}$ containing links connecting quantum nodes.

\noindent\textbf{QUESTION:} Does an entanglement routing schedule ($\mathcal{RS}$) for all demands (source-destination pairs) exist in the network, such that the number of entangled paths between any source-destination pair is not less than $k$?

In particular, if there are more dynamic routing options between a source-destination pair, the flexibility of the network can be further increased.
Since qubits inside of a quantum node can entangle to qubits in other nodes and they can also swap with the other qubits located in the same node in different orders, the data flow of a source-destination pair may traverse through differently entangled alternate paths. The \textit{minimum traffic flexibility} of all demands in the network can be mathematically expressed as follows:\vspace{-5pt}

\begin{equation}\label{eq:k}
k=\underset{(s,d)\in \mathcal{D}}{\mathop{\min }}\,\sum\limits_{_{(s,d)}}{{{\mathcal{P}}_{(s,d)}}{{1}_{\left\{ {{\mathcal{E}}_{(u,v)}}=1\forall (u,v)\in {{\mathcal{P}}_{(s,d)}} \right\}}}}
\end{equation}

\noindent
Where $\mathcal{P}_{(s,d)}$ denotes the entangled path connecting the source-destination pair $(s,d)$.
Here, $1_{\{.\}}$ is the indicator function; it is equal to one if the condition in the subscript is true, otherwise zero.
The natural objective of the $\mathcal{RS}$ is, therefore, to maximize the \textit{traffic flexibility} for all demands by increasing the number of dynamically entangled paths by leveraging quantum entanglement and swapping.\vspace{-5pt}

\begin{equation}\label{eq:obj}
\textbf{Obj:}\text{  } \underset{(s,d)\in \mathcal{D}}{\mathop{\operatorname{maximize}}}\,\text{  }k
\vspace{-2pt}
\end{equation}

\begin{equation}\label{eq:const1}
\text{s.t.  } constraints: (\ref{eq:k})
\vspace{-2pt}
\end{equation}

\begin{equation}\label{eq:const2}
\mathcal{E}_{(u,v)} \in \{0,1\}, \forall\ (u,v)\in \mathcal{P}_{(s,d)}
\end{equation}

Where $\mathcal{E}_{(u,v)}$ denotes the entanglement status between the two quantum nodes $u$ and $v$. Constraint \ref{eq:const2} indicates that any link on any feasible path connecting the source and destination must be an entangled link.
The objective function is a challenging combinatorial optimization problem. The optimization functions potentially come with an exponential number of variables that cannot be solved in polynomial time in general. The number of possible entanglements and swappings rise exponentially with the network scale$-$\textit{the combinatorial curse}.

The investigated problem can be expressed as the general case (with multiple source-destination pairs) of the node-disjoint path (NDP) problem on graphs that aims also to find a set of paths linking a specified pair of nodes such that no two paths share a node. The hardness of the $k$-entangled routing problem is easily indicated because the NDP problem is shown as NP-complete \cite{vygen1995np}. Due to space limitation, we do not elaborate the detailed proof here. In the following sections, we present two different approaches to optimize this objective.
Particularly, in section \refsec{smpsa} we propose the Sequential Multi-Path Scheduling Algorithm (SMPSA) and argue that it is possible to first construct at least $1$-entangled routing for all demands based on the initial network topology and then the $k$-entangled routing will be leveraged by utilizing the available qubits and repeaters.
However, alone, it fails to achieve the $k$-entangled routing for all demands if the availability of network resources is very tight.
This leads us to introduce the second algorithm, Min-Cut-based Multi-Path Scheduling Algorithm (MCSA) in section \refsec{mcsa}, which considers to achieve the $k$-entangled routing for all demands at the same time.

\section{Optimizing the traffic flexibility} \label{sec:algos}

In this section we aim to optimize the traffic flexibility by considering the $k$-entangled routing problem, wherein each demand $(s_i,d_i)$ is considered to be connected by at least $k$ entangled paths\footnote{Paths are "qubit-disjoint"}. We then analyze preliminary properties of the proposed algorithms in \refsec{fact}, and discuss the convoluted nature of $k$-entangled routing via experiments, in \refsec{netsquid}.

\subsection{Sequential Multi-path Scheduling Algorithm (SMPSA)} \label{sec:smpsa}

\begin{algorithm} [http]
\caption{Sequential Multi-path Scheduling Algorithm} \label{alg:smpsa}
\textbf{Input:} $\mathcal{N}, \mathcal{L}, \mathcal{C}_u, \mathcal{D}$\\
\textbf{Output:} $k$
\begin{algorithmic}[1]

\State Construct $\mathcal{G}_e = (\mathcal{N}_e, \mathcal{L}_e)$

\State Let $\mathcal{P}_{(src, dst)}$ be the set of entangled paths allocated for $(src, dst)$ $: \mathcal{P}_{(src,dst)} \leftarrow \emptyset$ $\forall\ (src, dst) \in \mathcal{D}$

\While{$\mathcal{D} \ne \emptyset$}

    \State Let $(src, dst)$ be the first demand in the ordered list of demands $\mathcal{D}$

    \State $\mathcal{D} \leftarrow \mathcal{D} - \{(src,dst)\}$

    \State Let $p_{(src, dst)}$ be the shortest path from $src$ to $dst$ in $\mathcal{G}_e$ such that any two adjacent nodes on the shortest path are entangled to each other.

    \If{$\exists$ $p_{(src, dst)}$}

        \State $\mathcal{P}_{(src, dst)} \leftarrow \mathcal{P}_{(src, dst)} + \{p_{(src, dst)}\}$

        \State $\mathcal{L}_e \leftarrow \mathcal{L}_e - \{(u,v)\}$ $\forall\ (u,v) \in p$

        \State $\mathcal{D} \leftarrow \mathcal{D} + \{(src,dst)\}$

    \EndIf

\EndWhile

    \State return $k\leftarrow \underset{\left( src,dst \right)\in \mathcal{D}}{\mathop{\min }}\,\left| {{\mathcal{P}}_{\left( src,dst \right)}} \right|$

\end{algorithmic}
\end{algorithm}

We consider the following designs in the post-steps of the initialization of an \textit{entangled quantum network} in which all entangled links are already determined and we denote the entangled network by $\mathcal{G}_e = (\mathcal{N}_e, \mathcal{L}_e)$.
Given $\mathcal{G}_e = (\mathcal{N}_e, \mathcal{L}_e)$, the problem of demand scheduling and path allocation poses several unique complexities.
In this section, we present the sequential multi-path scheduling algorithm (SMPSA) to tackle the $k$-entangled routing problem (eq. \ref{eq:obj}).
The key idea behind the design of the SMPSA is to first construct at least $1$-entangled routing for all demands based on the initial network topology and then the $k$-entangled routing will be enabled by utilizing available qubits and repeaters.
We note that the SMPSA is designed as a centralized algorithm, and therefore each node has up to date global link-state knowledge.
In particular, the SMPSA maintains the ordered set of demands $\mathcal{D}$ using a first-come first-serve (FCFS) policy (thereby maintaining fairness), and accommodates the available paths for a given demand sequentially i.e. one at a time.
On the basis thereof, let $(src,dst)$ be the first demand in $\mathcal{D}$. We denote $p_{(src,dst)}$ and $\mathcal{P}_{(src,dst)}$ as the shortest path and the set of shortest paths connecting $src$ and $dst$, respectively. In each iteration, $(src, dst)$ is removed from $\mathcal{D}$, and accommodated with the shortest path $p_{(src, dst)}$ on the basis of the availability of network resources i.e. qubits and quantum links. If the path $p_{(src, dst)}$ exists, the SMPSA allocates the resources for $p_{(src, dst)}$ and pushes $(src, dst)$ back into the rear of $\mathcal{D}$. However if the network cannot support $p_{(src, dst)}$, the demand $(src, dst)$ is accommodated within the next time slot based on the number of paths allocated $|\mathcal{P}_{(src, dst)}|$.
We remark that the re-addition of $(src, dst)$ into $\mathcal{D}$ serves as the pivotal step towards ensuring fair path allocation for the SMPSA. This is primarily because the sequential allocation of paths provides a fair-share of network resources for each demand, in contrast to the increased chances of resource-contention between the demands in $\mathcal{D}$ for the allocation of multiple paths on each iteration. However, we observe that the FCFS queuing policy of the demands prioritizes older demands.
Upon the allocation of $p_{(src, dst)}$, the capacities and quantum links within $\mathcal{G}_e$ are updated appropriately, and the reduced subgraph $\mathcal{G}_e'$ is considered for the future iterations such that $\mathcal{G}_e \leftarrow \mathcal{G}_e'$. Finally, the aforementioned procedure is repeated until there are no available paths for all $(src, dst)$ in $\mathcal{D}$, after which the value of $k$ is determined and returned. Algorithm \ref{alg:smpsa} describes the working of the SMPSA in detail, and the time complexity of the SMPSA is derived within Theorem \ref{thm:time-smpsa}. Although the SMPSA obtains good results in an efficient manner, the prioritization of resources towards older demands poses performance issues towards obtaining the highest value of $k$ for all demands in the network. In the following section, we discuss an alternative paradigm i.e. the MCSA, which aims to provide a better performance using the concept of the min-cut for the prioritization of demands, along with the concurrent allocation of paths while guaranteeing a fair time complexity.

\subsection{Min-Cut-based Multi-path Scheduling Algorithm (MCSA)} \label{sec:mcsa}

\begin{algorithm} [http]
\caption{Min-Cut-based Multi-path Scheduling Algorithm} \label{alg:mcsa}
\textbf{Input:} $\mathcal{N}, \mathcal{L}, \mathcal{C}_u, \mathcal{D}$\\
\textbf{Output:} $k$
\begin{algorithmic}[1]

\State Construct $\mathcal{G}_e = (\mathcal{N}_e, \mathcal{L}_e)$

\State Let $\mathcal{P}_{(src, dst)}$ be the set of entangled paths allocated for $(src, dst)$ $: \mathcal{P}_{(src,dst)} \leftarrow \emptyset$ $\forall\ (src, dst) \in \mathcal{D}$

\While{$\mathcal{D} \ne \emptyset$}

    \For{$(s,d) \in$ $\mathcal{D}$}

        \State Let $C_{(s,d)}$ be the minimum cut that separates $s$ and $d$ from each other in $\mathcal{G}_e$.

    \EndFor

    \State Let $(src, dst)$ be the demand in $\mathcal{D}$ having the minimum path flexibility $\mathcal{F}(src, dst)$

    \State $\mathcal{D} \leftarrow \mathcal{D} - \{(src,dst)\}$

    \State Let $n$ represent the maximum number of paths allocated for a demand $:n_{(src, dst)} \leftarrow \min(\mathcal{C}_{src}, \mathcal{C}_{dst})$

    \While{$n \ne 0$}

        \State Let $p_{(src, dst)}$ be the shortest path from $src$ to $dst$ in $\mathcal{G}_e$ such that any two adjacent nodes on the shortest path are entangled to each other.

        \If{$p_{(src, dst)} \ne \emptyset$}

            \State $\mathcal{P}_{(src, dst)} \leftarrow \mathcal{P}_{(src, dst)} + \{p_{(src, dst)}\}$

            \State $\mathcal{L}_e \leftarrow \mathcal{L}_e - \{(u,v)\}$ $\forall\ (u,v) \in p_{(src, dst)}$

            \State $n_{(src, dst)} \leftarrow n_{(src, dst)} - 1$

        \Else

            \State $n_{(src, dst)} \leftarrow 0$

        \EndIf

    \EndWhile

\EndWhile

\State return $k\leftarrow \underset{\left( src,dst \right)\in \mathcal{D}}{\mathop{\min }}\,\left| {{\mathcal{P}}_{\left( src,dst \right)}} \right|$

\end{algorithmic}
\end{algorithm}

The $k$-entangled routing obtained through the SMPSA in \refsec{smpsa} attains a fair value $k$ for entangled routing paths
of all demands we desire. However, it may not be sufficient to maximize $k$ through the prioritization of fairness, since the resource availability of the demands are not considered.
In order to attain the maximum value of $k$, in this section we propose the Min-Cut-based Multi-path Scheduling Algorithm (MCSA) that performs well within reasonable bounds of computational complexity.
The core idea of the MCSA is the prioritization of resources towards the demands with lesser path flexibility.
Given an entangled quantum network $\mathcal{G}_e = (\mathcal{N}_e, \mathcal{L}_e)$ with unit edge-weights $\omega(l):\omega(l) = 1$ $\forall\ l \in \mathcal{L}_e$, we define the path flexibility of a demand (i.e., $(s,d)$) $\mathcal{F}((s, d))$ as the minimum number of edges required to disconnect any entangled path from $s$ to $d$ such that $(s, d) \in \mathcal{D}$, and mathematically formulate it as follows:

\begin{equation}\label{eq:path-flex}
    \mathcal{F}(s, d) = |C_{(s, d)}|\ \forall\ (s, d) \in \mathcal{D}
\end{equation}

In eq. \ref{eq:path-flex}, $C_{(s, d)}$ denotes the minimum $s$-$d$ cut set, and the path flexibility $\mathcal{F}(s,d)$ is given by the cardinality of $C_{(s, d)}$. We note that since $\mathcal{G}_e$ consists of unit edge-weights the capacity of $C_{(s, d)}$ i.e. $\sum\limits_{e \in C_{(s, d)}}\sum\limits_{(u, v) \in e} \omega(u,v)$, is equivalent to the cardinality of $C_{(s, d)}$ i.e. $\mathcal{F}(s, d)$.
For each iteration of the MCSA, we firstly compute the minimum $s$-$t$ cut of all the demands $(s, d) \in \mathcal{D}$, after which the demand $(src, dst)$ with the minimum path flexibility $\mathcal{F}(src, dst)$ is chosen. To accommodate multiple paths for $(src, dst)$ concurrently, we compute the value of $n_{(src, dst)}$ to determine the maximum number of paths which can be supported from $src$ to $dst$ through the minimum capacity of the $src$ and $dst$ i.e. $\min(\mathcal{C}_{src},\mathcal{C}_{dst})$. Upon the computation of $n_{(src, dst)}$, the shortest path $p_{(src, dst)}$ is allocated iteratively so as to ensure the absence of resource contention on the allocation of each path. Once a given path $p_{(src, dst)}$ is allocated, the links and capacities within the current network $\mathcal{G}_e$ is updated to $\mathcal{G}_e'$ such that $\mathcal{G}_e \leftarrow \mathcal{G}_e'$. Unlike the SMPSA, the above process is repeated until $\mathcal{D}$ is empty, and hence $k$ is obtained. The detailed operation of the MCSA is described in Algorithm \ref{alg:mcsa}, and its time complexity is derived within Theorem \ref{thm:time-mcsa}. In addition, we provide thorough performance analysis of the proposed algorithms in \refsec{eval}.

\subsection{Preliminary Facts} \label{sec:fact}

\begin{thm} \label{thm:time-smpsa}
The time complexity of the SMPSA is bounded in $O(|\mathcal{C}| \times |\mathcal{N}|)^3 \times |\mathcal{D}|)$.
\end{thm}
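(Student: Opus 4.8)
\noindent\emph{Proof plan.} The plan is to split the running time into a one–time preprocessing cost, the cost of a single pass of the main loop of Algorithm~\ref{alg:smpsa}, and a bound on how many such passes can occur; the product of the last two quantities, plus the preprocessing, is then simplified to the claimed bound $O((|\mathcal{C}|\cdot|\mathcal{N}|)^3\cdot|\mathcal{D}|)$. Throughout, write $|\mathcal{C}|$ for an upper bound on the number of qubits hosted by any single node, so that the total number of qubits in the network is $O(|\mathcal{C}|\cdot|\mathcal{N}|)$.

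First I would pin down the size of the working graph $\mathcal{G}_e=(\mathcal{N}_e,\mathcal{L}_e)$. Modeling each qubit as a vertex (so that the requirement that allocated paths be qubit–disjoint becomes ordinary vertex–disjointness) gives $|\mathcal{N}_e| = O(|\mathcal{C}|\cdot|\mathcal{N}|)$ and, since $\mathcal{G}_e$ is simple, $|\mathcal{L}_e| = O(|\mathcal{N}_e|^2) = O((|\mathcal{C}|\cdot|\mathcal{N}|)^2)$. Hence constructing $\mathcal{G}_e$ (line~1) and initializing the path sets (line~2) together take $O((|\mathcal{C}|\cdot|\mathcal{N}|)^2 + |\mathcal{D}|)$, which will be absorbed by the loop. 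Next, one pass of the loop pops a demand, runs a single shortest–path search over $\mathcal{G}_e$ restricted to entangled links (a BFS/Dijkstra–type traversal costing $O(|\mathcal{N}_e| + |\mathcal{L}_e|) = O((|\mathcal{C}|\cdot|\mathcal{N}|)^2)$), removes the $O(|\mathcal{N}_e|)$ edges of the returned path, and possibly re–enqueues the demand — everything dominated by the search, so one pass costs $O((|\mathcal{C}|\cdot|\mathcal{N}|)^2)$.

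The crux — and the step I expect to be the main obstacle — is bounding the number of passes, which is not immediate because a served demand is pushed back onto $\mathcal{D}$ (line~10), so the loop is not simply $O(|\mathcal{D}|)$ iterations long. I would organize the FCFS execution into \emph{rounds}, a round being one sweep in which every demand currently queued is processed exactly once; a demand survives into the next round only if it was just allocated a new qubit–disjoint path. Since every qubit–disjoint path from $src$ to $dst$ must leave $src$ through a distinct qubit of $src$, a demand can be allocated at most $\min(\mathcal{C}_{src},\mathcal{C}_{dst}) \le |\mathcal{C}|$ paths in total, so it survives at most $|\mathcal{C}|+1$ rounds; as this holds for every demand, there are at most $|\mathcal{C}|+1$ rounds overall, and each round performs at most $|\mathcal{D}|$ passes (the number of input demands). (A potential argument — every successful pass deletes at least one edge of $\mathcal{L}_e$, so there are at most $|\mathcal{L}_e|$ successful and at most $|\mathcal{D}|$ failing passes — also yields termination and a polynomial bound, but the round count gives the cleaner product.) Multiplying, the loop runs in $O(|\mathcal{C}|\cdot|\mathcal{D}|\cdot(|\mathcal{C}|\cdot|\mathcal{N}|)^2) = O(|\mathcal{C}|^3\,|\mathcal{N}|^2\,|\mathcal{D}|)$, and since $|\mathcal{N}|\ge 1$ this is bounded by $O((|\mathcal{C}|\cdot|\mathcal{N}|)^3\,|\mathcal{D}|)$, which also dominates the preprocessing term; this establishes the theorem. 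The points to be careful about are precisely the per–demand cap of $O(|\mathcal{C}|)$ on qubit–disjoint paths and the resulting $O(|\mathcal{C}|)$ bound on rounds — everything else is routine.
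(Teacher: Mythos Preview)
Your argument is correct, and it is actually more careful than the paper's own proof, which reaches the same bound by a different distribution of factors. The paper prices a single shortest-path computation at $O((|\mathcal{C}|\cdot|\mathcal{N}|)^3)$ using Bellman--Ford (that is, $O(|\mathcal{N}_e|\cdot|\mathcal{L}_e|)$), and then simply multiplies by $|\mathcal{D}|$, without explicitly addressing the re-enqueueing in line~10 that you correctly identify as the nontrivial step. You instead use a cheaper $O((|\mathcal{C}|\cdot|\mathcal{N}|)^2)$ traversal per pass and compensate with your round-counting argument, bounding the total number of passes by $O(|\mathcal{C}|\cdot|\mathcal{D}|)$ via the per-demand cap of $|\mathcal{C}|$ qubit-disjoint paths. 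Both routes land on $O((|\mathcal{C}|\cdot|\mathcal{N}|)^3\cdot|\mathcal{D}|)$, but yours in fact yields the sharper intermediate bound $O(|\mathcal{C}|^3|\mathcal{N}|^2|\mathcal{D}|)$ and, more importantly, closes the termination/iteration-count gap that the paper's proof leaves implicit.
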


\begin{proof}
Because the are at most $|\mathcal{C}| \times |\mathcal{N}|$ qubits and $\frac{|\mathcal{C}| \times |\mathcal{N}| (|\mathcal{C}| \times |\mathcal{N}| - 1)}{2}$ entangled links, the construction of $\mathcal{G}_e = (\mathcal{N}_e, \mathcal{L}_e)$ requires $O((|\mathcal{C}| \times |\mathcal{N}|)^2)$. In the while loop of the SMPSA, the sigle shortest path for a pair of source and destination can be determined within $O(|\mathcal{C}| \times |\mathcal{N}| \times \frac{|\mathcal{C}| \times |\mathcal{N}| (|\mathcal{C}| \times |\mathcal{N}| - 1)}{2})$ using the Bellman-Ford algorithm \cite{7916533}. It is bounded in $O((|\mathcal{C}| \times |\mathcal{N}|)^3)$. There are $|\mathcal{D}|$ number of demands in the network, then it requires at most $O((|\mathcal{C}| \times |\mathcal{N}|)^3 \times |\mathcal{D}|)$ to determine the shortest path for all demands.
Therefore, the time complexity of the SMPSA is bounded in $O((|\mathcal{C}| \times |\mathcal{N}|)^2 + (|\mathcal{C}| \times |\mathcal{N}|)^3 \times |\mathcal{D}|) = O(|\mathcal{C}| \times |\mathcal{N}|)^3 \times |\mathcal{D}|)$.
\end{proof}

\begin{thm} \label{thm:time-mcsa}
The time complexity of the MCSA is bounded in $O(|\mathcal{C}|^2 \times |\mathcal{N}|)^5 \times |\mathcal{D}|^2)$.
\end{thm}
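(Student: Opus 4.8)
The plan is to reuse the accounting of Theorem~\ref{thm:time-smpsa} and then add the cost of the two ingredients that distinguish the MCSA: the repeated minimum-cut computations used to rank demands, and the batched allocation of up to $\min(\mathcal{C}_{src},\mathcal{C}_{dst})$ shortest paths for the chosen demand.

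First I would fix the size of the working graph exactly as in Theorem~\ref{thm:time-smpsa}: $\mathcal{G}_e=(\mathcal{N}_e,\mathcal{L}_e)$ has at most $|\mathcal{C}|\times|\mathcal{N}|$ vertices (qubits) and at most $\tfrac{|\mathcal{C}|\times|\mathcal{N}|\,(|\mathcal{C}|\times|\mathcal{N}|-1)}{2}=O((|\mathcal{C}|\times|\mathcal{N}|)^{2})$ entangled links, so line~1 costs $O((|\mathcal{C}|\times|\mathcal{N}|)^{2})$. Next I would bound the loop structure. Unlike the SMPSA, the outer \textbf{while} loop never re-inserts a demand, hence it runs at most $|\mathcal{D}|$ times; summing the sizes of the current demand set over these iterations shows that the body of the inner \textbf{for} loop is executed $\sum_{i=1}^{|\mathcal{D}|}(|\mathcal{D}|-i+1)=O(|\mathcal{D}|^{2})$ times in total.

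Then I would charge the work inside one outer iteration. (i) For each remaining demand the algorithm computes a minimum cut separating the network nodes $s$ and $d$ in $\mathcal{G}_e$; since an endpoint is a set of qubits, this is a minimum $S$-$D$ edge cut after attaching a super-source $S$ joined by infinite-capacity edges to every qubit of $s$ and a super-sink $D$ joined from every qubit of $d$, which is recovered from one maximum-flow computation plus an $O(|\mathcal{L}_e|)$ residual-graph scan. Charging a standard augmenting-path maximum-flow algorithm on a graph with $O(|\mathcal{C}|\times|\mathcal{N}|)$ vertices and $O((|\mathcal{C}|\times|\mathcal{N}|)^{2})$ edges gives a per-demand bound of $O((|\mathcal{C}|\times|\mathcal{N}|)^{5})$, which (since $|\mathcal{C}|\ge 1$) is within $O((|\mathcal{C}|^{2}\times|\mathcal{N}|)^{5})$. (ii) Selecting the demand of least path flexibility is $O(|\mathcal{D}|)$. (iii) The inner \textbf{while} loop iterates at most $\min(\mathcal{C}_{src},\mathcal{C}_{dst})\le|\mathcal{C}|$ times, each time running one Bellman-Ford shortest entangled-path search in $O((|\mathcal{C}|\times|\mathcal{N}|)^{3})$ and then updating $\mathcal{L}_e$ and the node capacities in $O((|\mathcal{C}|\times|\mathcal{N}|)^{2})$. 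Multiplying, the min-cut work over the whole run is $O(|\mathcal{D}|^{2}\times(|\mathcal{C}|^{2}\times|\mathcal{N}|)^{5})$ and the path-allocation work is $O(|\mathcal{D}|\times|\mathcal{C}|\times(|\mathcal{C}|\times|\mathcal{N}|)^{3})$; together with line~1 the first term dominates, which yields the claimed $O((|\mathcal{C}|^{2}\times|\mathcal{N}|)^{5}\times|\mathcal{D}|^{2})$.

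The step I expect to be the main obstacle is pinning down the cost of the minimum-cut subroutine so that the fifth-power factor is a genuine upper bound: one has to be careful about which maximum-flow algorithm is charged, about the super-source/super-sink reduction that is needed because a demand endpoint is a set of qubits rather than a single vertex, and about whether parallel entangled links inflate $|\mathcal{L}_e|$ beyond the simple-graph bound $O((|\mathcal{C}|\times|\mathcal{N}|)^{2})$. Everything else is the same nested-loop bookkeeping as in Theorem~\ref{thm:time-smpsa}, now with one extra $|\mathcal{D}|$ factor from the inner \textbf{for} loop and one extra $|\mathcal{C}|$ factor from the batched shortest-path allocation.
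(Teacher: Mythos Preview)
Your argument is correct and reaches the stated bound, but the route differs from the paper's in one essential respect: how the fifth-power factor arises. The paper charges Karger's algorithm at $O((|\mathcal{C}|\times|\mathcal{N}|)^{2})$ per minimum cut, bounds the inner shortest-path work by $O((|\mathcal{C}|\times|\mathcal{N}|)^{3}\times|\mathcal{D}|)$, and then \emph{multiplies} these two (sequential) per-outer-iteration costs to obtain $(|\mathcal{C}|\times|\mathcal{N}|)^{5}\times|\mathcal{D}|^{2}$. You instead charge a max-flow-based $s$-$t$ cut at $O((|\mathcal{C}|\times|\mathcal{N}|)^{5})$ per demand, count the total inner-\textbf{for} executions as $\sum_{i=1}^{|\mathcal{D}|}(|\mathcal{D}|-i+1)=O(|\mathcal{D}|^{2})$, and let the min-cut term dominate the separately accounted path-allocation term. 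Your accounting is structurally cleaner (you add sequential costs and multiply nested ones, and you make explicit the super-source/super-sink reduction needed because endpoints are qubit sets), at the price of a much looser per-cut estimate; the paper's version is terser but leans on the product of two non-nested costs and on invoking Karger for what is really an $s$-$t$ cut. Both are valid upper bounds for the theorem as stated.
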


\begin{proof}
As indicated in above analysis, the construction of $\mathcal{G}_e = (\mathcal{N}_e, \mathcal{L}_e)$ requires $O((|\mathcal{C}| \times |\mathcal{N}|)^2)$. In the outer while loop, it requires $O((|\mathcal{C}| \times |\mathcal{N}|)^2)$ to return the minimum cut for a demand in the network using the Karger’s algorithm \cite{10.1145/234533.234534}. There are $|\mathcal{D}|$ number of demands in the network, then it requires at most $O((|\mathcal{C}| \times |\mathcal{N}|)^2 \times |\mathcal{D}|)$ to determine the minimum cut for all demands. It also requires $O((|\mathcal{C}| \times |\mathcal{N}|)$ to compute the  maximum number of paths allocated for a demand.
In the inner while loop, it requires at most $O((|\mathcal{C}| \times |\mathcal{N}|)^3 \times |\mathcal{D}|)$ to determine the shortest path for all demands. We therefore have that the complexity of the MCSA is bounded in $O((|\mathcal{C}| \times |\mathcal{N}|)^2 \times |\mathcal{D}| \times (|\mathcal{C}| \times |\mathcal{N}|)^3 \times |\mathcal{D}|) = O(|\mathcal{C}|^2 \times |\mathcal{N}|)^5 \times |\mathcal{D}|^2)$, which thus completes the proof.

\end{proof}

To demonstrate the feasibility of the proposed solutions, We firstly consider the $k$-entangled routing problem in different network settings, wherein the network topology generated after the entanglement $\mathcal{G}_e$ is the grid. Let $B$ and $B'$ be the upper and lower boundary of the network, respectively. In particular, $B$ and $B'$ are the two rows located in the upper and lower boundary of the network. We show that the following observation always holds true.

\begin{thm}
Given the quantum network in the form of grid, $\mathcal{G}_e = (r \times c)$-grid. Then for any pair of source and destination $(src,dst)$, in which $src \in V(B)$ and $dst \in V(B')$, there always exists the set $\mathcal{P}_{(src,dst)}$ of entangled paths connecting the $src$ and $dst$. In addition, $\mathcal{P}_{(src,dst)}$ can be determined efficiently in polynomial time.
\end{thm}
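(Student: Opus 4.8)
The plan is to establish both claims --- existence of the path set $\mathcal{P}_{(src,dst)}$ and its polynomial-time computability --- by exploiting the regular structure of the grid. First I would fix coordinates so that each node of the $(r\times c)$-grid is a pair $(i,j)$ with $1\le i\le r$ and $1\le j\le c$, where row $1$ is the upper boundary $B$ and row $r$ is the lower boundary $B'$; then $src=(1,a)$ and $dst=(r,b)$ for some columns $a,b$, and $(i,j),(i',j')$ are entangled (adjacent in $\mathcal{G}_e$) iff $|i-i'|+|j-j'|=1$, which is exactly the adjacency condition required of consecutive nodes on an entangled path.

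To prove existence I would give an explicit construction. The building block is a vertical path $V_j$ equal to the whole of column $j$, namely $(1,j),(2,j),\dots,(r,j)$; distinct $V_j$'s are pairwise vertex-disjoint. An $src$--$dst$ path is then assembled by gluing a horizontal prefix along row $1$ starting at $(1,a)$, one such vertical column, and a horizontal suffix along row $r$ ending at $(r,b)$, choosing the intermediate column so that the horizontal pieces lying in $B$ and in $B'$ do not overlap. Concretely, one path descends column $a$ and then runs along row $r$ to $(r,b)$; a second, internally disjoint path steps one column toward $b$ along row $1$, descends that neighbouring column, and enters $(r,b)$ along row $r$ from the opposite side; iterating this toward the left and right boundaries of the grid yields a whole family of internally vertex-disjoint (hence qubit-disjoint) $src$--$dst$ paths, so $\mathcal{P}_{(src,dst)}\neq\emptyset$. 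In the degenerate shapes --- $c=1$ (the grid is itself a path), or a corner $src$ or $dst$ of degree $2$ --- the same scheme still produces at least one path, which suffices for the statement.

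For the polynomial-time claim I would note, first, that the construction above is itself an $O(|\mathcal{N}_e|)$ procedure, since it only traverses a bounded number of rows and columns. Alternatively --- and this is the cleaner statement to cite, as the paper already invokes cut/flow computations in Theorem \ref{thm:time-mcsa} --- the maximum number of qubit-disjoint $src$--$dst$ paths equals the minimum $src$--$dst$ vertex cut by Menger's theorem, and a maximum such family is extracted by the standard vertex-splitting reduction to a unit-capacity max-flow instance on a graph of size $O(|\mathcal{N}_e|)$, solvable in polynomial time by repeated BFS augmentation; this also certifies that the returned set has maximum cardinality, and in particular recovers the per-demand contribution to $k$ in eq.~(\ref{eq:k}).

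The main obstacle I anticipate is not the flow/Menger part but making the explicit construction airtight across all configurations of $a$, $b$, $r$, and $c$: one must verify that the horizontal connectors laid along $B$ and along $B'$ never reuse a node shared with another member of the family, and handle the boundary and corner cases ($a=b$, $a$ or $b$ equal to $1$ or $c$, very small grids) uniformly. If the intended reading is the stronger one --- that the constructed family has size exactly $k$, i.e. matches the minimum cut --- then the crux becomes exhibiting a vertex cut whose cardinality equals that of the constructed family (for a non-corner top-boundary $src$, its set of neighbours is the natural candidate), which is again a short case analysis on the grid geometry.
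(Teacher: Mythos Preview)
Your argument is sound, but it proceeds quite differently from the paper's. You prove existence \emph{constructively}: you name coordinates, route each path down a distinct column with short horizontal connectors in rows $1$ and $r$, and then observe that disjointness is immediate from the column decomposition (modulo the corner/boundary case analysis you flag). The paper instead works \emph{indirectly through cuts}: it passes to the subgraph $\mathcal{G}'_e$ obtained by deleting unused boundary nodes, invokes Menger's theorem to equate the maximum number of disjoint $src$--$dst$ paths with the minimum cut, and then argues by contradiction that a purported separating cut $C$ cannot actually disconnect the demand --- the point being that if $C$ is small, some row $R$ and some columns $M,M'$ avoid $C$ entirely, and the surviving row/column fragments keep $src$ and $dst$ in the same component of $\mathcal{G}'_e\setminus C$. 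Both routes finish the polynomial-time claim the same way, via single-commodity max-flow. Your construction is more elementary and yields explicit paths one can read off the grid, at the cost of the case analysis you anticipate; the paper's cut-counting avoids that case analysis and generalises more naturally to the multi-demand setting it is implicitly aiming at (note the proof repeatedly quantifies over all $(src,dst)\in\mathcal{D}$ and uses the bound ``at least $|\mathcal{D}|+2$ rows''), but is less concrete about what the path family actually looks like.
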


\begin{proof}
Let $\mathcal{G}'_e$ be the graph that induced by removing all nodes of $(B \bigcup B') \backslash \mathcal{D}$. The property is equivalent with that if there exists the set of entangled paths $\mathcal{P}_{(src,dst)}$ connecting $src$ and $dst$ in $\mathcal{G}'_e$ for all $(src,dst) \in \mathcal{D}$.
According to the Menger's theorem \cite{bohme2001menger}, the size of a minimum cut set is equal to the maximum number of disjoint paths that can be found between any pair of vertices in the graph. It means that for $\mathcal{G}'_e \backslash C$, in which $C$ is the minimum cut of the network, there is no path connecting any pair of source and destination in $\mathcal{D}$. In fact, the source and destination nodes in $\mathcal{D}$ may not contain a node in $C$. Some columns, say $M$, do not contain a node in $C$, and some others $M'$ contain. Since, there are at least $|\mathcal{D}| + 2$ rows in $G$, there are some rows, say $R$, that do not contain any node in $C$. We then have that $(M \bigcap R \bigcap M') \bigcup \mathcal{G}'_e$ belong to the same connected component of $\mathcal{G}'_e \backslash C$ and this connected component contains the demand in $\mathcal{D}$, a contradiction. The set of entangled paths $\mathcal{P}_{(src,dst)}$ therefore can be obtained efficiently by the maximum single-commodity flow between the source and destination.
\end{proof}

\subsection{From Quantum Entanglement and Teleportation to Routing}\label{sec:netsquid}

Quantum entanglement and teleportation have evolved from communications of two quantum nodes, where data is delivered \textit{point-to-point} from a source to a \textit{neighbor} destination, to communications of a long distant source-destination pair with the help of quantum repeaters. Such an architecture is employed not only for a small network but also for a large-scale networks. In either case, timely delivery of routing information is crucial, as it can have a significant impact on the success of quantum teleportation and thus the bottom-line of service providers and quantum network architecture design. In this section, we implement a realistic quantum network using a software tool, called ``NetSquid" \cite{Coopmans2021} for modelling the stochastic processes underlying quantum communication. The parameters used to setup the quantum network are aptly tabulated, and the results organized within this section are incorporated within \refsec{eval} to better understand how the proposed algorithms would fit into the operation of practical large scale quantum networks.\\

\begin{tabular}{ |p{3.8cm}||p{3.5cm}|}
 \hline
 \multicolumn{2}{|c|}{Quantum Network Parameters} \\
 \hline
 Parameter&Value\\
 \hline
 Number of nodes ($|\mathcal{N}|$)    &   $10$ \\
 Propagation speed ($c$)   &   \SI{200000}{\km\per\second} \\
 Length of link ($l_d$) & (\num{1}, \num{2.5}, \num{5}, \num{7.5}) \si{\km}\\
 Propagation delay ($t_{l}$)   &   $(l_d/c)$ \si{\second}\\
 Source frequency ($f_{s})$   &   \num{21.05} $l_d$ \si{\hertz}\\
 Dephasing rate ($r_{deph}$)   &   [$\num{1}$, $\num{10}^4$] \si{\mega\hertz}\\
 Depolarization rate ($r_{depo}$)   &   [$\num{10}^{-3}$, $\num{10}$] \si{\mega\hertz}\\
 Fidelity ($F$) &   [$0.0, 1.0$]\\
 \hline
\end{tabular}
\vspace{5pt}

\noindent \textbf{Network setup.} We conduct experiments on a quantum network with the following parameters: the number of nodes $|\mathcal{N}|$ such that $|\mathcal{N}|=10$;
the propagation speed ($c$) which indicates the speed of light in a fiber optic cable i.e. $c= \SI{200000}{\km\per\second}$;
the length of the fiber optic cable denoted by $l_d$ is varied throughout the experiments such that $l_d = (\num{1}, \num{2.5}, \num{5}, \num{7.5})\ \si{\km}$; the propagation delay ($t_l$) indicates the delay in seconds (\si{\second}) for a qubit to be transmitted through its respective fiber optic link, and is given as $t_l = \frac{l_d}{c}$ \si{\second}; the source frequency ($f_s$) indicates the frequency at which the source nodes create entangled qubits i.e. $f_s$ = \num{21.05} $l_d$ \si{\hertz};
the dephasing rate ($r_{deph}$) indicates the exponential depolarizing rate used to apply dephasing noise to qubit(s) on a quantum component, and its value is varied from $\num{1}$ $\si{\mega\hertz}$ to $\num{10}^4$ $\si{\mega\hertz}$ within the experiments; similarly, the depolarization rate ($r_{depo}$) indicates the the exponential depolarizing rate per unit time ($Hz$), however it is used to apply depolarizing noise to qubit(s) on a quantum component, and is varied from $\num{10}^{-3}$ $\si{\mega\hertz}$ to $\num{10}$ $\si{\mega\hertz}$ during the experiment; Finally, the fidelity ($F$) is obtained via the simulations to show the closeness of two quantum states, and its value ranges from $0.0$ to $1.0$.

\noindent \textbf{Properties of fidelity.} Experimental studies \cite{Jozsa94} have demonstrated that fidelity reflects the accuracy of the transmission of a qubit. In other words, ``it expresses the probability that one state will pass a test to identify as the other" \cite{Jozsa94} and primarily occurs due to the following reasons:

\begin{itemize}
    \item The data being compressed before transmission, and;
    \item The communication channel (e.g. fiber optic cable) being subject to random noise.
\end{itemize}

During the transmission of pure quantum states, the states are generally degraded slightly resulting in the receiver obtaining mixed quantum states. Since, fidelity quantifies how well entanglement is maintained and persevered through quantum channels \cite{Mousolou2020}, it plays an essential role for the realization of quantum networks. Given two pure quantum states $\rho$ and $\sigma$, the fidelity function is obtained as follows:\vspace{-10pt}

\begin{equation}\label{eq:fidelity}
F(\rho, \sigma)= \left\{\mathrm{trace}\sqrt{\sqrt{\rho}\sigma\sqrt{\rho}}\right\}^2
\end{equation}

\noindent \textbf{Results. }To model the convoluted processes underlying pragmatic quantum networks, we conduct two experiments (Fig. \ref{fig:phase3-results}) to better understand the effect of the dephasing noise rate ($r_{deph}$), depolarization noise rate ($r_{depo}$), and the length the quantum links ($l_d$), on the quantum fidelity ($F$). Specifically, in Fig. \ref{fig:dephase} and \ref{fig:depolar} we analyze the average value of $F$ for $2000$ experiments, on increasing $r_{deph}$ and $r_{depo}$ respectively, over unique values of $l_d$. Clearly, from the experiments conducted in Fig. \ref{fig:phase3-results}, $r_{deph}, r_{depo}$, and $l_d$ considerably effect the value of $F$. Furthermore, since the value of $F$ has a large effect on the connectivity of the network, we introduce the overall entanglement generation probability ($p_s$) in \refsec{eval} to pragmatically model the performance of the proposed paradigms.

\begin{figure}[t]
\center
\subfigure[]{\includegraphics[width=8cm]{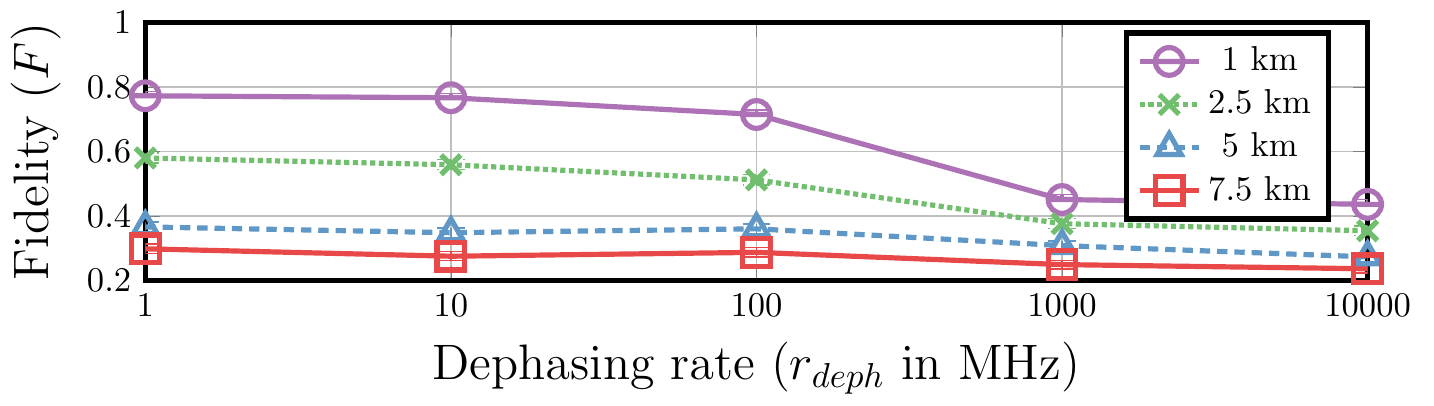} \label{fig:dephase}}
\subfigure[]{\includegraphics[width=8cm]{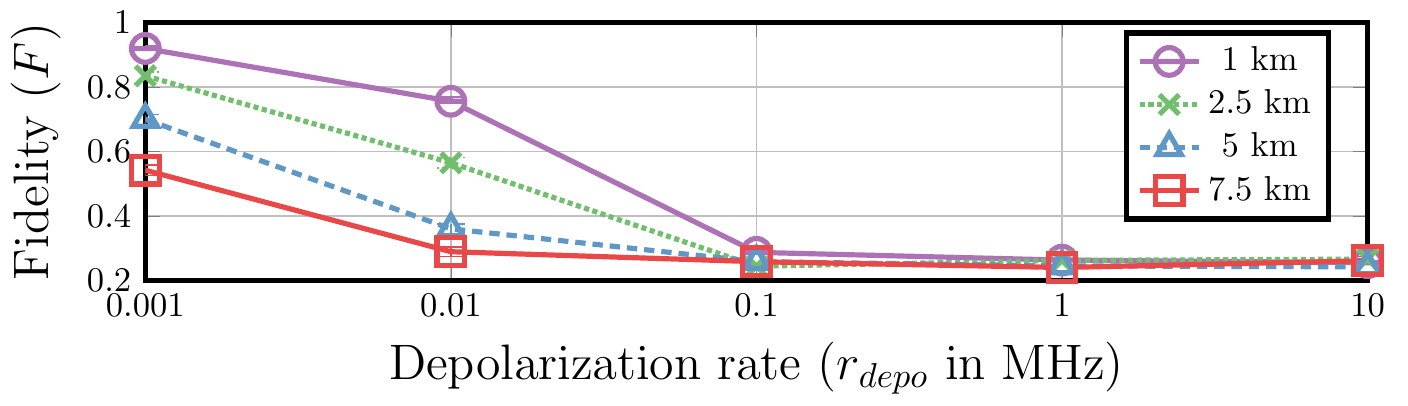} \label{fig:depolar}}
\vspace{-5pt}
\caption{Fidelity of quantum states when a) the dephasing rate ($r_{deph}$) ranging from $\num{1}$ to $\num{10}^4$ \si{\mega\hertz}, and b) the depolarization rate ($r_{depo}$) ranging from $10^{-3}$ to $10$ \si{\mega\hertz}.}
\vspace{-15pt}
\label{fig:phase3-results}
\end{figure}

\lstset{frame=tb,
  language=tcl,
  aboveskip=3mm,
  belowskip=3mm,
  showstringspaces=false,
  columns=flexible,
  basicstyle={\small\ttfamily},
  numbers=none,
  breaklines=true,
  breakatwhitespace=true,
  tabsize=3
}

\begin{lstlisting}[caption={Dephase log file},captionpos=b, label={lst:log}]
0.228 secs: Setting node distance of 1.0 km.
------------------------------------------
0.228 secs: Setting the dephase noise rate as 1000000.00 Hz.
------------------------------------------
0.230 secs: Initializing 10 nodes in the network with 1.0 km between the nodes
0.230 secs: Creating EntanglingConnection between node 0 and 7.
...
0.238 secs: Calculating the fidelity.
0.238 secs: The fidelity is 0.773.
...
12.941 secs: Setting node distance of 2.5 km.
\end{lstlisting}

In the following we show the topology of the quantum network we used to conduct the experiments above.

\begin{figure}[http]
\center
\subfigure{\includegraphics[width=6cm]{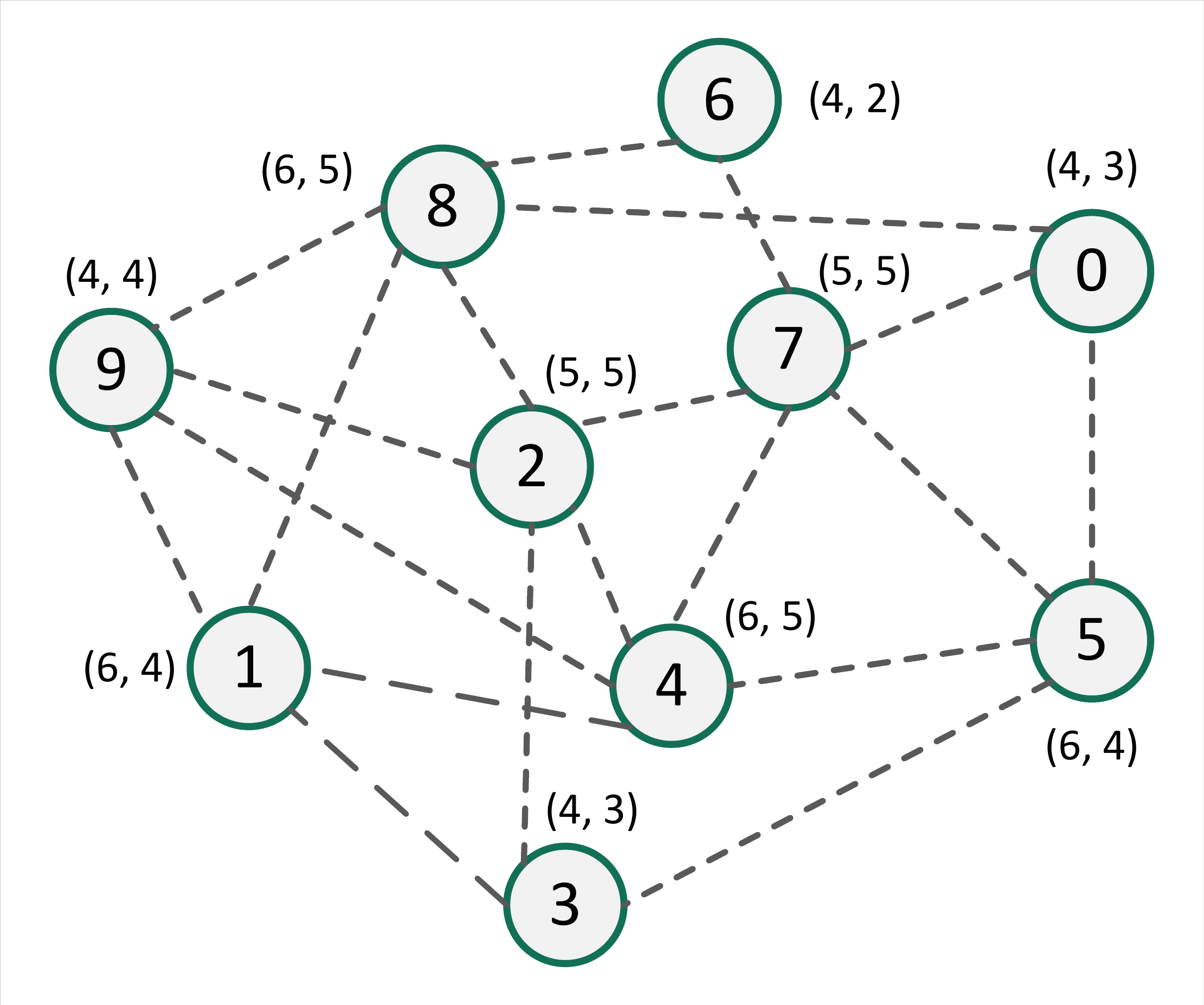}}
\caption{Network topology and qubit settings.}
\label{fig:phase3-topology}
\end{figure}

\section{Performance Evaluation}\label{sec:eval}

In this section, we assess the performance of the scheduling algorithms by implementing the proposed paradigms and performing extensive simulations. To better characterize the performance, we evaluate the algorithms on the basis of three primary metrics; (i) scheduling efficiency ($\refsec{sim-scheff}$), (ii) path selection efficiency ($\refsec{sim-patheff}$), and (iii) resource utilization efficiency  ($\refsec{sim-reseff}$). Furthermore, we define two additional scheduling paradigms, namely the Random Multi-path Scheduling Algorithm (RMPSA) and Distance Multi-path Scheduling Algorithm (DMPSA). The RMPSA maintains a first-come first-serve based queuing policy, and selects the paths for a given demand randomly. Similarly, the DMPSA also maintains a first-come first-serve based queuing policy, however the path selection is on the basis of the minimum physical distance for a given demand.

We setup the quantum network in the simulation environment consisting of $50,100, 150, 200,$ or $250$ nodes, and a randomized topology to fully reflect the complexity of the problem. Furthermore, each link within the network is allotted a physical distance ($d_l:l\in\mathcal{L}$) such that the average physical distance of the links in the network is $7.44$ km, $12.87$ km, $17.49$ km, $22.48$ km, or $27.5$ km. The physical distance metric supports the simulation of the overall entanglement generation probability $p_s$ such that $p_s = e^{-\alpha d_l}$, where $\alpha$ is a constant depended on the transmission media (e.g. fiber optic link). Lastly, the average capacity ($\mathcal{C}_{avg})$ of the quantum nodes is set to $3, 5.05, 6.95, 9.09,$ or $10.96$, and the number of demands ($|\mathcal{D}|$) within the ordered set is taken as $5, 10, 15, 20$ or $25$.

\subsection{Scheduling Efficiency}\label{sec:sim-scheff}

\begin{figure}[http]
\center
\subfigure[]{\includegraphics[width=4.2cm]{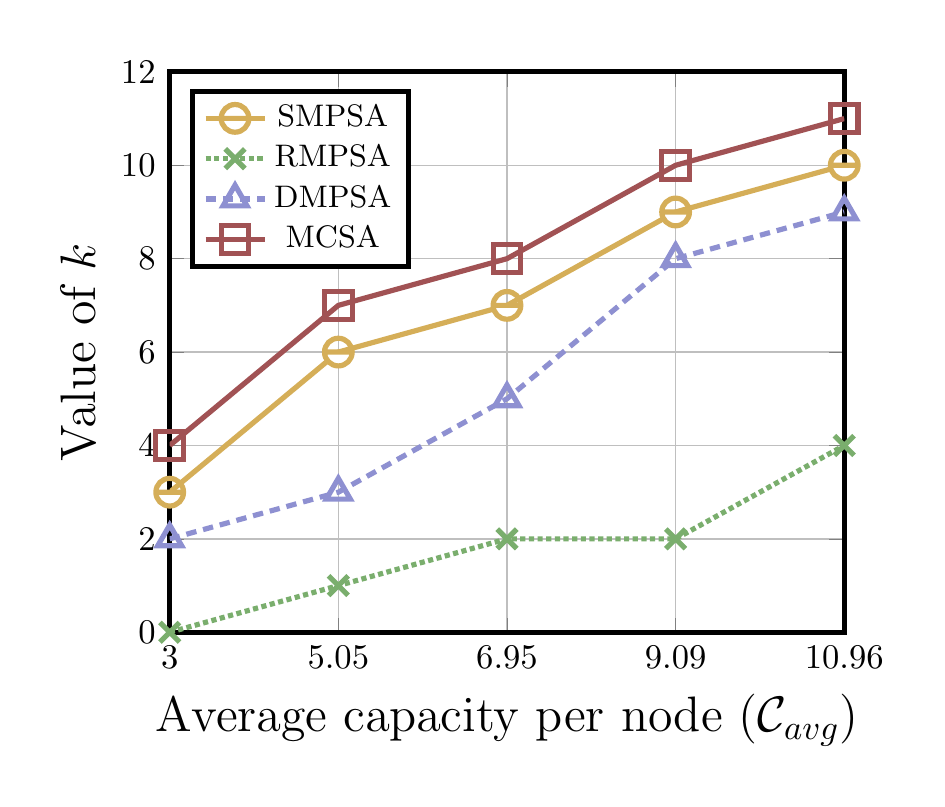} \label{fig:scheff_avg_cap}}
\subfigure[]{\includegraphics[width=4.2cm]{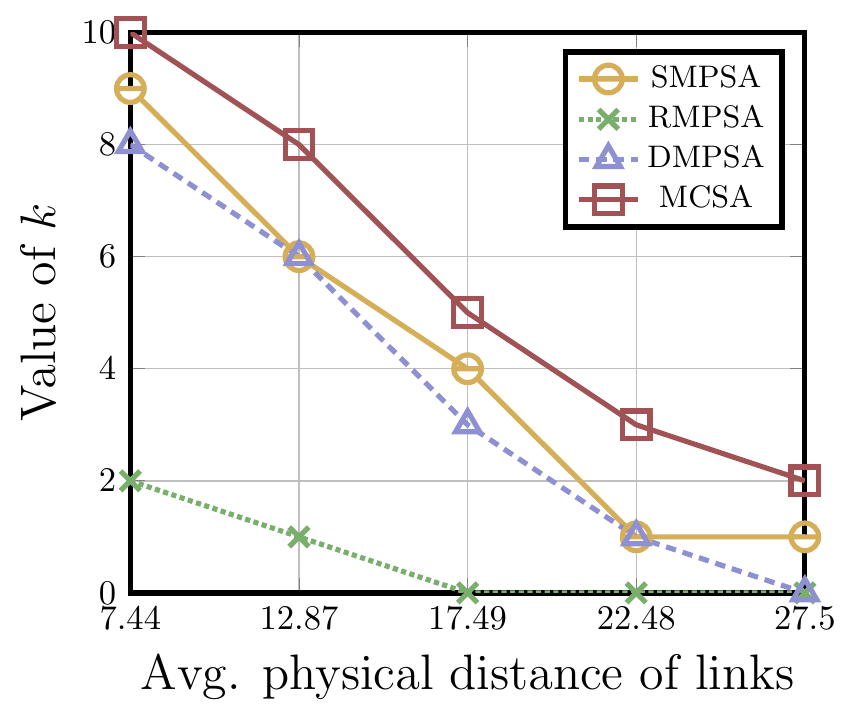}\label{fig:scheff_avg_phys}}
\subfigure[]{\includegraphics[width=4.2cm]{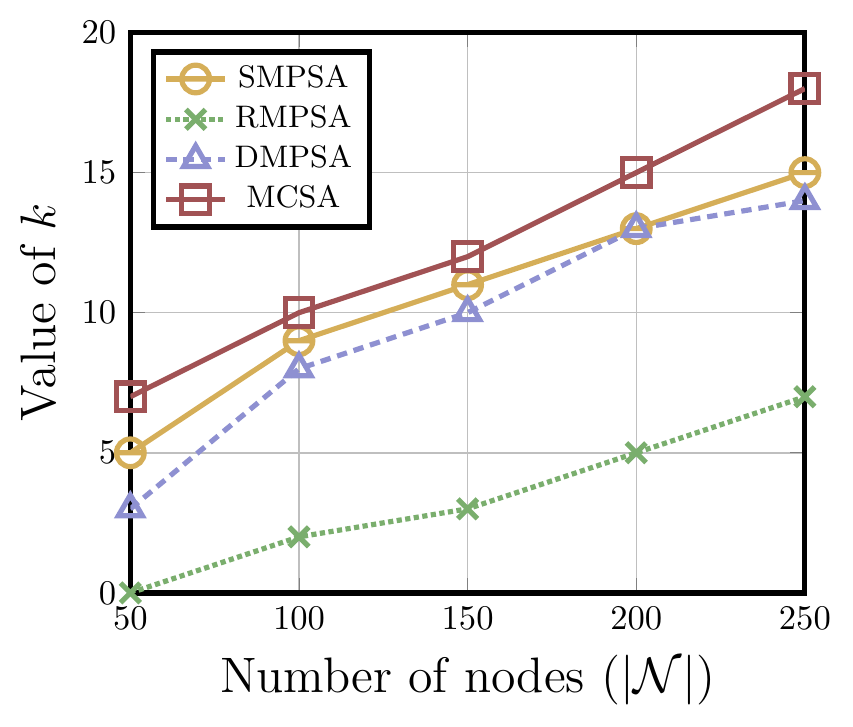} \label{fig:scheff_num_node}}
\subfigure[]{\includegraphics[width=4.2cm]{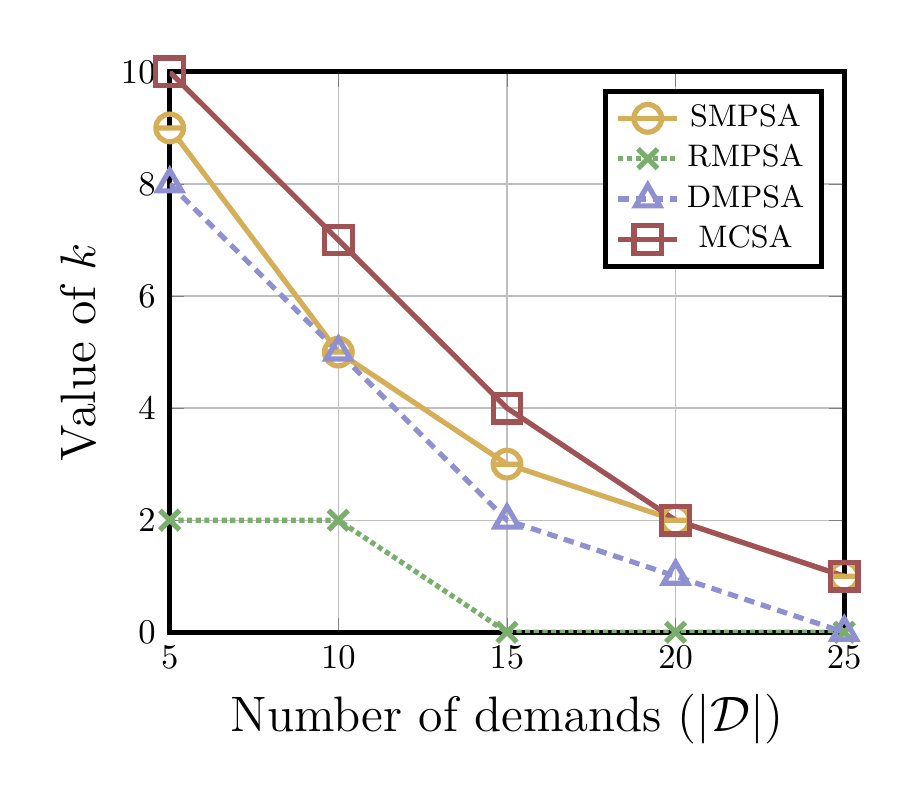}\label{fig:scheff_num_dem}}
\vspace{-5pt}
\caption{The value of $k$ when a) the average capacity per node ($\mathcal{C}_{avg}$) ranging from $3$ to $10.96$, b) average physical distance of links ranging from $7.44$ to $27.5$, c) number of nodes ($|\mathcal{N}|$) ranging from $50$ to $250$, and d) number of demands ($|\mathcal{D}|$) ranging from $5$ to $25$.}
\vspace{-10pt}
\label{fig1:ph1-kval}
\end{figure}

We first evaluate the scheduling efficiency of the algorithms by comparing the values of $k$ obtained by the SMPSA, MCSA, RMPSA, and DMPSA under five distinct network environments. Fig. \ref{fig1:ph1-kval} and Fig. \ref{fig:ph1-avgkval} demonstrate the scheduling performance of the algorithms in terms of the value of $k$, and the average value of $k$ for distinct simulation iterations respectively.

Fig. \ref{fig:scheff_avg_cap} shows the comparison of the value of $k$ when the average capacity per node ($\mathcal{C}_{avg}$) is increasing from $3$ to $10.96$. In this case, the network parameters are setup such that, the average physical distance is set to $7.44$, $|\mathcal{N}| = 100$, and $|\mathcal{D}| = 5$. As illustrated within the figure, the MCSA and SMPSA obtain higher values of $k$ as compared to the RMPSA and DMPSA thereby providing the best scheduling performance for qubit-deprived ($\mathcal{C}_{avg}=3$) and qubit-abundant ($\mathcal{C}_{avg}=10.96$) networks. During the demand scheduling procedure, MCSA and SMPSA achieve higher values of $k$ due to considering the qubit contention which occurs during the allocation of paths across unique demands. In MCSA, since the demands are prioritized on the basis of the min-cut, the demand with the least number of paths is accommodated first, thereby leading to a higher $k$. SMPSA however, achieves a higher value of $k$ by allocating paths sequentially, thus avoiding the over-compensation of qubits for the demands.

In Fig. \ref{fig:scheff_avg_phys}, we test the scheduling efficiency (via the value of $k$) on increasing the average physical distance of the links within the network, from $7.44\ km$ to $27.5\ km$. The network environment is setup such that, $\mathcal{C}_{avg} = 9.09$, $|\mathcal{N}| = 100$, and $|\mathcal{D}| = 5$.
Similar to Fig. \ref{fig:scheff_avg_cap}, the MCSA and SMPSA provide the best scheduling performance by achieving a lower rate of decrease in the value of $k$ as compared to the RMPSA and DMPSA.
The decrease in the value of $k$ is due to the decline of $p_s$ on increasing the average physical distance of links, since it results in the decrease of the connectivity of the network corresponding to the decreasing values of $k$ in Fig. \ref{fig:scheff_avg_phys}.
We observe that once again accounting for the qubit-contention across demands, has resulted in a better scheduling efficiency for MCSA and SMPSA.

Fig. \ref{fig:scheff_num_node} illustrates the comparison of the value of $k$ with the increase in the number of nodes ($|\mathcal{N}|$) from $50$ to $250$. We instantiate a quantum network such that $\mathcal{C}_{avg} = 9.09$, the average physical distance of a link within the network is $7.44$, and $|\mathcal{D}| = 5$. As demonstrated in the figure, with the increase in $|\mathcal{N}|$ the MCSA and SMPSA achieve the best performance. The increasing value of $|\mathcal{N}|$ lays emphasis on an efficient scheduling paradigm to obtaining a higher value of $k$. This occurs since a higher number of nodes corresponds to a higher number of paths. However, ensuring the appropriate path allocation of a given demand, and the efficient scheduling of all the demands is essential to achieve the higher $k$ value.

In Fig. \ref{fig:scheff_num_dem}, we examine the variation in the value of $k$ with the increase in the number of demands ($|\mathcal{D}|$) from $5$ to $25$. The quantum network is setup with $\mathcal{C}_{avg} = 9.09$, $\mathcal{N}=100$, and the average physical distance of the links set to $7.44$. As illustrated in the figure, the MCSA and SMPSA experience a lower decrease in the value $k$ in contrast to the RMPSA and DMPSA. With the increase in $|\mathcal{D}|$, the need for qubit-contention free path allocation and efficient scheduling to lower the rate of decrease in the value of $k$ increases.

\begin{figure}[http]
\center
\includegraphics[width=8.4cm]{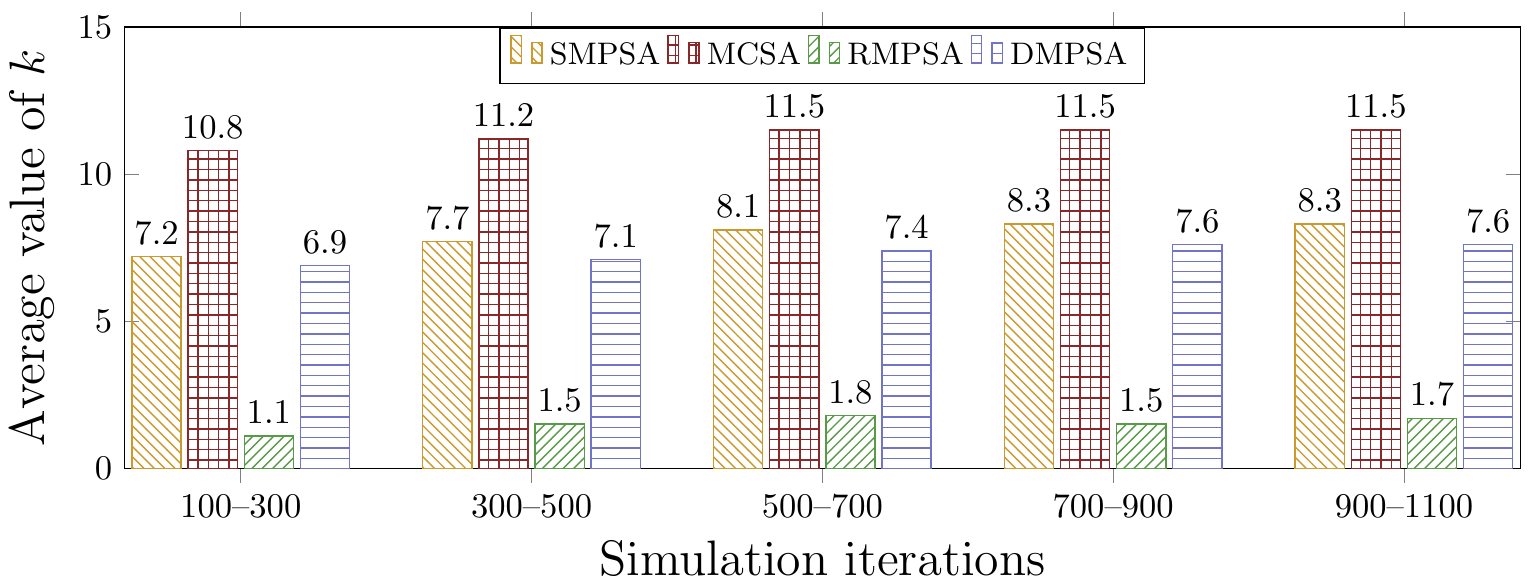}
\caption{The average value of $k$ for the number of simulation iterations varying from $100$ to $1100$.}
\label{fig:ph1-avgkval}
\end{figure}

Clearly from the above descriptions, of the four algorithms being compared in terms of their scheduling efficiency, MCSA and SMPSA achieve the best performance due to taking account of the qubit-contention while scheduling demands. However, Fig. \ref{fig1:ph1-kval} only helps demonstrate the scheduling efficiency of the algorithms relative to the change in various network parameters. To better understand the overall scheduling efficiency of the proposed paradigms, we setup a quantum network with constant parameters such that, the average physical distance of the link is $7.44$, $|\mathcal{N}|=100$, $|\mathcal{D}|$, and $\mathcal{C}_{avg}=9.09$. To deduce the average scheduling performance, we measure the value of $k$ for the algorithms over multiple iterations of distinct randomized topologies. Fig. \ref{fig:ph1-avgkval} illustrates the average scheduling performance (via the average value of $k$) over varying intervals of iterations. As shown in the figure, MCSA obtains a higher average value of $k$ relative to SMPSA, RMPSA, and DMPSA. Moreover, with increasing iterations both the MCSA and SMPSA obtain stable average $k$ values in contrast to RMPSA. Therefore, the MCSA and SMPSA obtain the best scheduling efficiency under changing network conditions, and varying randomized topologies.

\subsection{Path Selection Efficiency}\label{sec:sim-patheff}

\begin{figure}[http]
\center
\subfigure[]{\includegraphics[width=4.2cm]{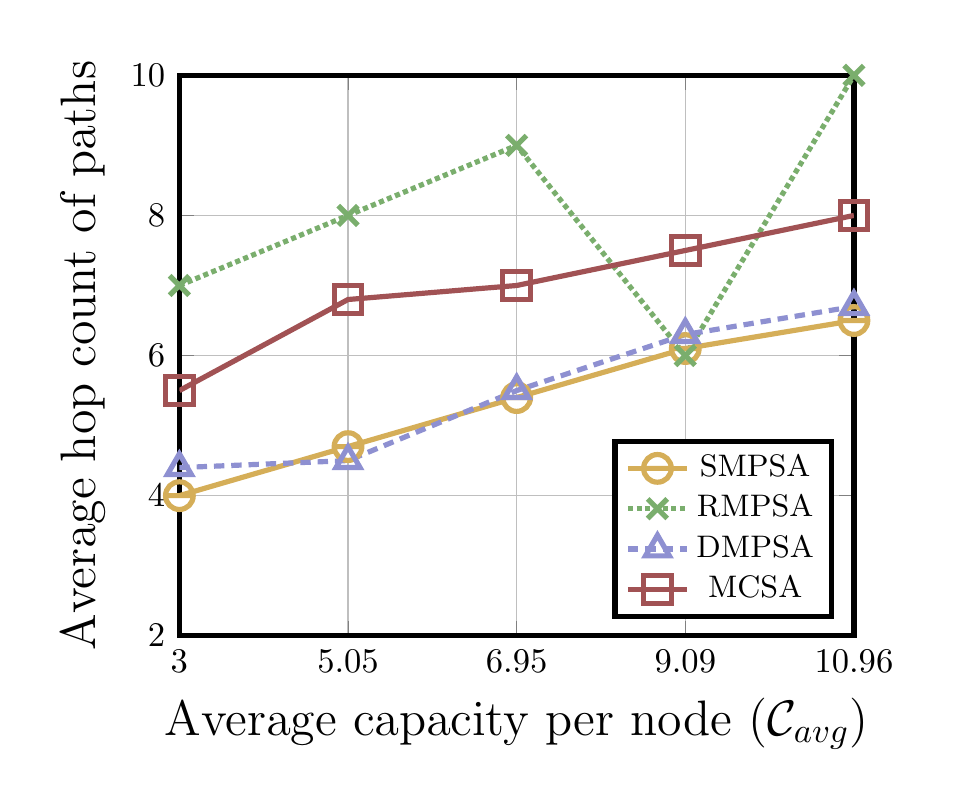} \label{fig:patheff_avg_cap}}
\subfigure[]{\includegraphics[width=4.2cm]{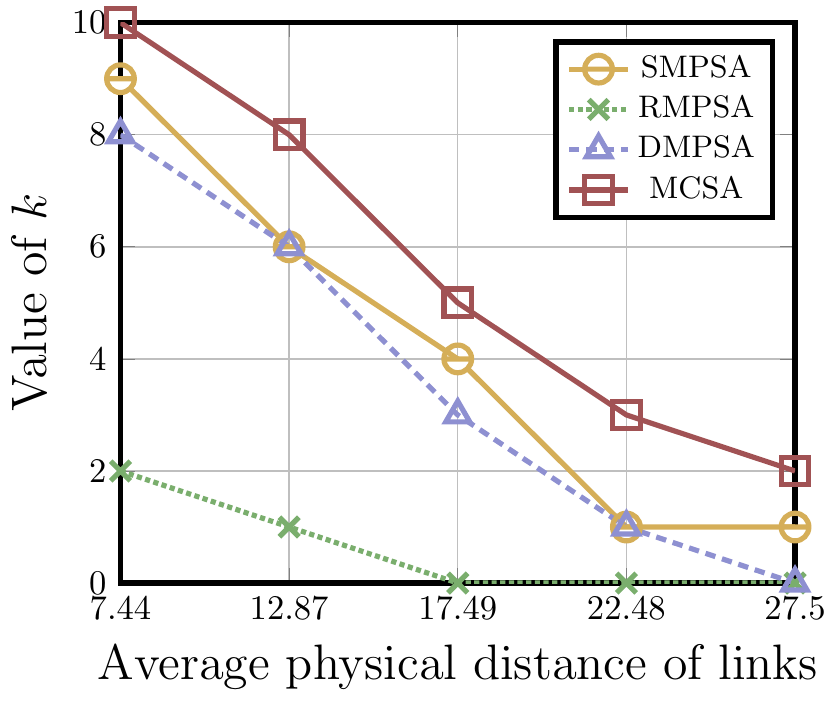}\label{fig:patheff_avg_phys}}
\subfigure[]{\includegraphics[width=4.2cm]{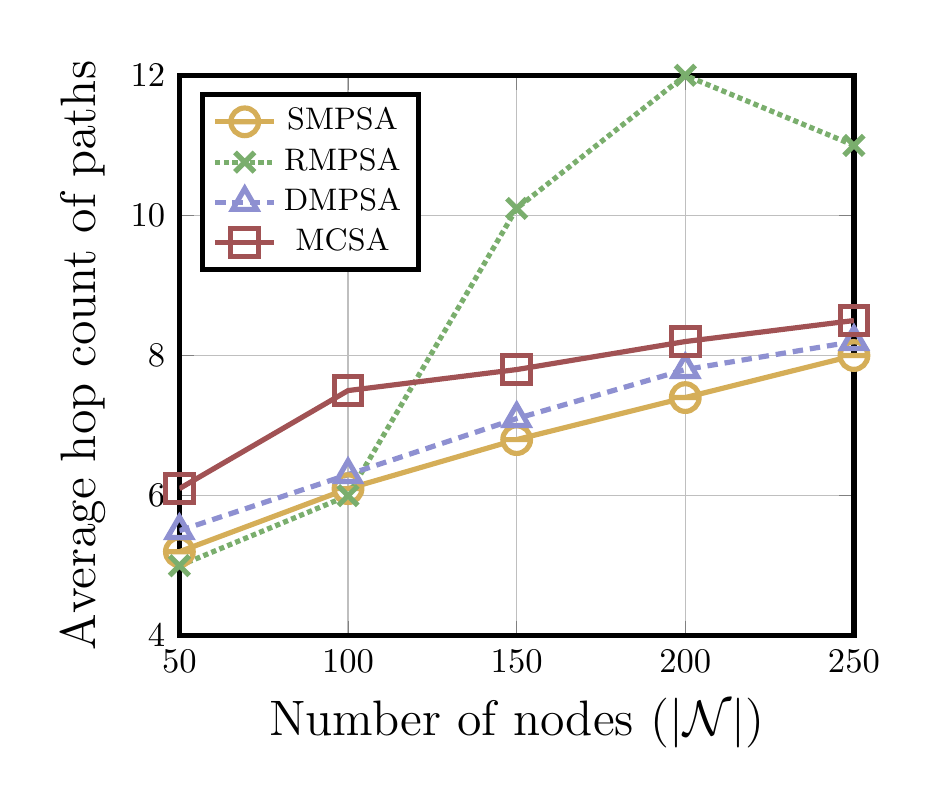} \label{fig:patheff_num_node}}
\subfigure[]{\includegraphics[width=4.2cm]{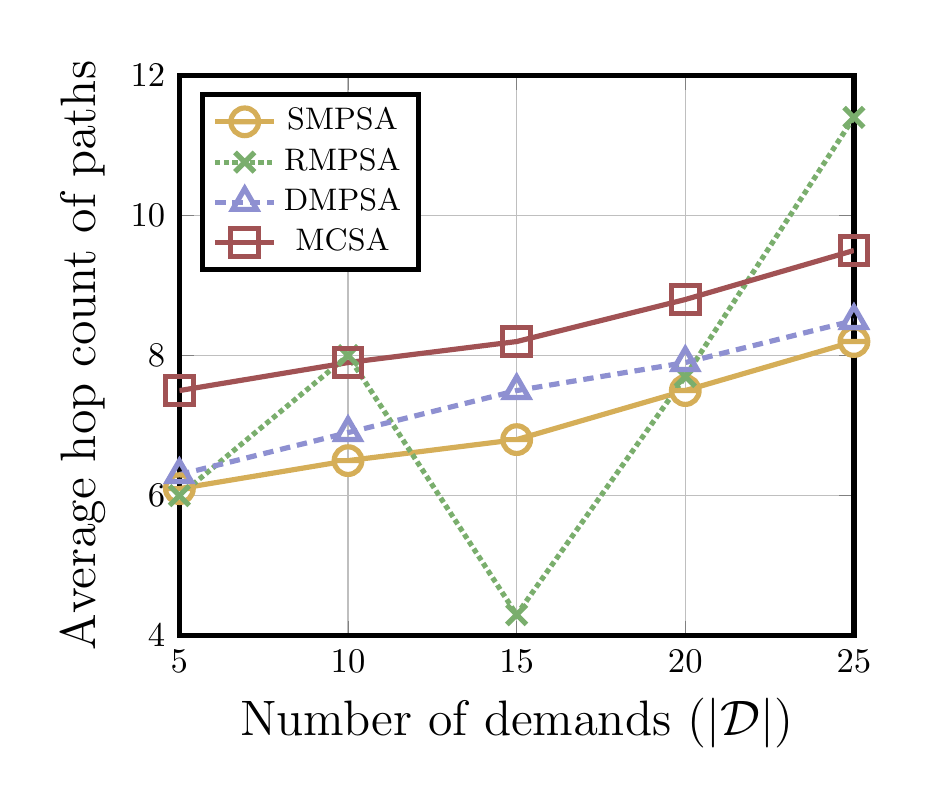}\label{fig:patheff_num_dem}}
\caption{The average hop count of allocated paths when a) the average capacity per node ($\mathcal{C}_{avg}$) ranging from $3$ to $10.96$, b) average physical distance of links ranging from $7.44$ to $27.5$, c) number of nodes ($|\mathcal{N}|$) ranging from $50$ to $250$, and d) number of demands ($|\mathcal{D}|$) ranging from $5$ to $25$.}
\label{fig:ph1-avghopline}
\end{figure}

In this section, we emphasize on establishing the path selection efficiency by measuring the average hop count of the paths allocated through the SMPSA, MCSA, RMPSA, and DMPSA. Fig. \ref{fig:ph1-avghopline} demonstrates the average hop count of the paths allocated by the proposed algorithms with varying network conditions, while Fig. \ref{fig:ph1-avghopbar} examines the average path selection efficiency over randomized topologies.

In Fig. \ref{fig:patheff_avg_cap}, we analyze the average hop count of the allocated paths on increasing the average capacity per node ($\mathcal{C}_{avg}$) from $3$ to $10.96$. In this case, the quantum network is setup identical to that of Fig. \ref{fig:scheff_avg_cap}. As presented in the figure, the MCSA and SMPSA achieve a steady increase in the average hop count of allocated paths as compared to the RMPSA. We observe that the steady increase occurs primarily due to the increasing value of $k$ (as illustrated in Fig. \ref{fig:scheff_avg_cap}) and the total number of paths allocated $\sum\limits_{(src, dst) \in \mathcal{D}}|\mathcal{P}_{(src, dst)}|$. Since the MCSA achieves the highest value of $k$ on increasing $\mathcal{C}_{avg}$, this results in the paths allocated via the MCSA acquiring a higher average hop count in comparison to the SMPSA and DMPSA. Therefore, we remark that the MCSA and SMPSA achieve the best path selection efficiency relative to the scheduling efficiency of the proposed paradigms.

Fig. \ref{fig:patheff_avg_phys} shows the comparison of the average hop count of the paths allocated by the proposed algorithms when the average physical distance of the links is varied from $7.44\ km$ to $27.5\ km$. The quantum network is instantiated similar to that of Fig. \ref{fig:scheff_avg_phys}. Similar to Fig. \ref{fig:patheff_avg_cap}, the MCSA and SMPSA achieve the best path selection efficiency (relative to the scheduling efficiency obtained), through a stable rate of decrease in the average hop count of the allocated paths. The rationale behind the decline of the average hop count of paths allocated is the minimization of the number of entangled qubits (due to the decline in $p_s$), thereby reducing the number of paths allocated by the proposed algorithms. We observe, that the MCSA and SMPSA obtain higher path selection efficiency results due to the prioritization of minimum hop paths, along with the scheduling performance.

In Fig. \ref{fig:patheff_num_node}, we investigate the average hop count of the allocated paths on increasing the number of nodes (|$\mathcal{N}$|) from $50$ to $250$. We initialize the quantum network identical to Fig. \ref{fig:scheff_num_node}. As demonstrated in the figure, the MCSA and SMPSA obtain better path selection efficiency as compared to the RMPSA and DMPSA, relative to the achieved scheduling efficiency. We note that the increase in the average hop count of the paths allocated through the MCSA and SMPSA with the increase of $\mathcal{N}$ occurs due to the allocation of higher number of total paths as compared to RMPSA and DMPSA, as a result of the minimum hop count based path selection.

Fig. \ref{fig:patheff_num_dem} illustrates the comparison of the average hop count of the allocated paths while increasing the number of demands ($|\mathcal{D}|$) from $5$ to $25$. The network parameters are set up similar to Fig. \ref{fig:scheff_num_dem}. Once again, the MCSA and SMPSA achieve the best path selection efficiency relative to scheduling performance of the proposed algorithms. The increase in the average hop count is because of the increase in the paths allocated with the rise of $|\mathcal{D}|$. The MCSA and SMPSA achieve the best path selection efficiency due to being able to accommodate greater number of paths as compared to the RMPSA and DMPSA (covered in detail within \refsec{sim-reseff}), and doing so on the basis of the hop count of paths.

\begin{figure}[t]
\center
\includegraphics[width=8.4cm]{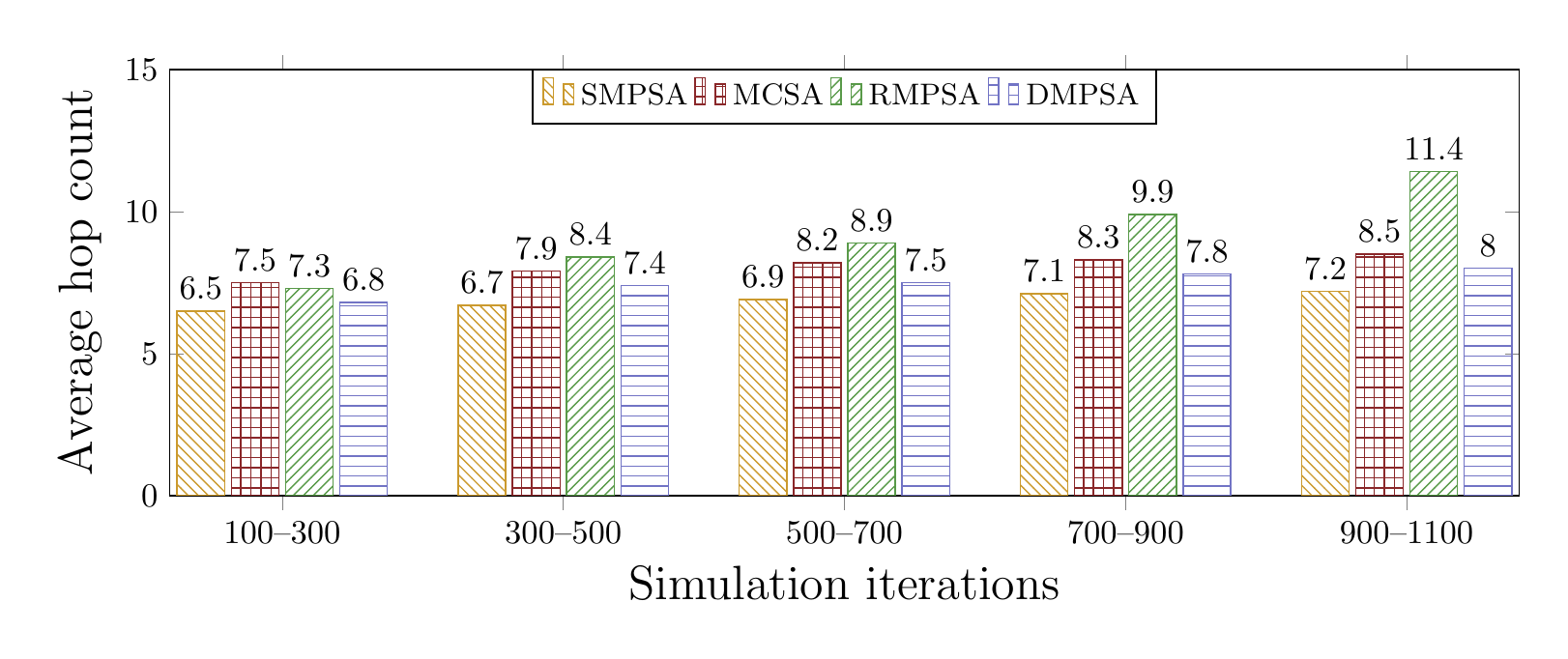}
\vspace{-5pt}
\caption{The average hop count of allocated paths for the number of simulation iterations varying from $100$ to $1100$.}
\vspace{-15pt}
\label{fig:ph1-avghopbar}
\end{figure}

From the analysis of the above figures, we observe that the MCSA and SMPSA attain the best path selection efficiency due to the prioritization of minimum hop paths, and the scheduling efficiency. However, to fully grasp the overall path selection performance of the proposed algorithms, we perform further simulations to evaluate the average hop count of the allocated paths over varying number of simulation iterations. In this case, the quantum network is setup similar to Fig. \ref{fig:ph1-avgkval}. In Fig. \ref{fig:ph1-avghopbar},  we instantiate the quantum network identical to Fig. \ref{fig:ph1-avgkval} and demonstrate the overall path selection performance of the proposed paradigms. As shown in the figure, the SMPSA achieves the lowest average hop count, while the RMPSA obtains the highest, compared to the MCSA and DMPSA. Furthermore, on account of the scheduling efficiency of the proposed algorithms, the MCSA and SMPSA achieve the best overall path selection efficiency taking the total number of paths allocated into consideration. Thus, the MCSA and SMPSA attain the best path selection efficiency under dynamic network environments, and distinct randomized topologies.

\subsection{Resource Utilization Efficiency}\label{sec:sim-reseff}

Finally, we investigate the resource utilization efficiency of the proposed algorithms by measuring the ratio of qubits exhausted in $\mathcal{G}_e$. We note, that the exhausted qubits in $\mathcal{G}_e$ is given by the ratio $(\frac{\mathcal{C}_\mathcal{N} - \mathcal{C}_\mathcal{N}'}{\mathcal{C}_\mathcal{N}})$, such that $\mathcal{C}_\mathcal{N}$ represents the total capacity of the network before path allocation, while $\mathcal{C}_\mathcal{N}'$ reflects the total capacity of the network after path allocation. Fig. \ref{fig1:ph1-remres} examines the qubit depletion ratio under changing network conditions, while Fig. \ref{fig:ph1-avgremres} describes the overall qubit depletion ratio over unique randomized toplogies.

\begin{figure}[http]
\center
\subfigure[]{\includegraphics[width=4.2cm]{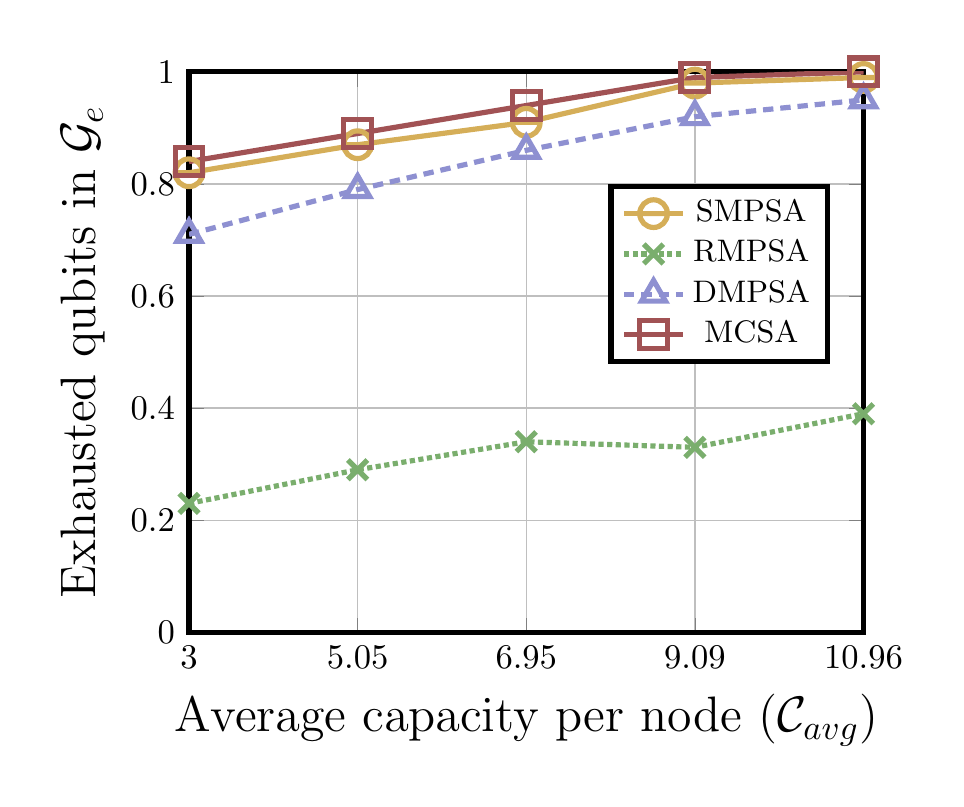} \label{fig:reseff_avg_cap}}
\subfigure[]{\includegraphics[width=4.2cm]{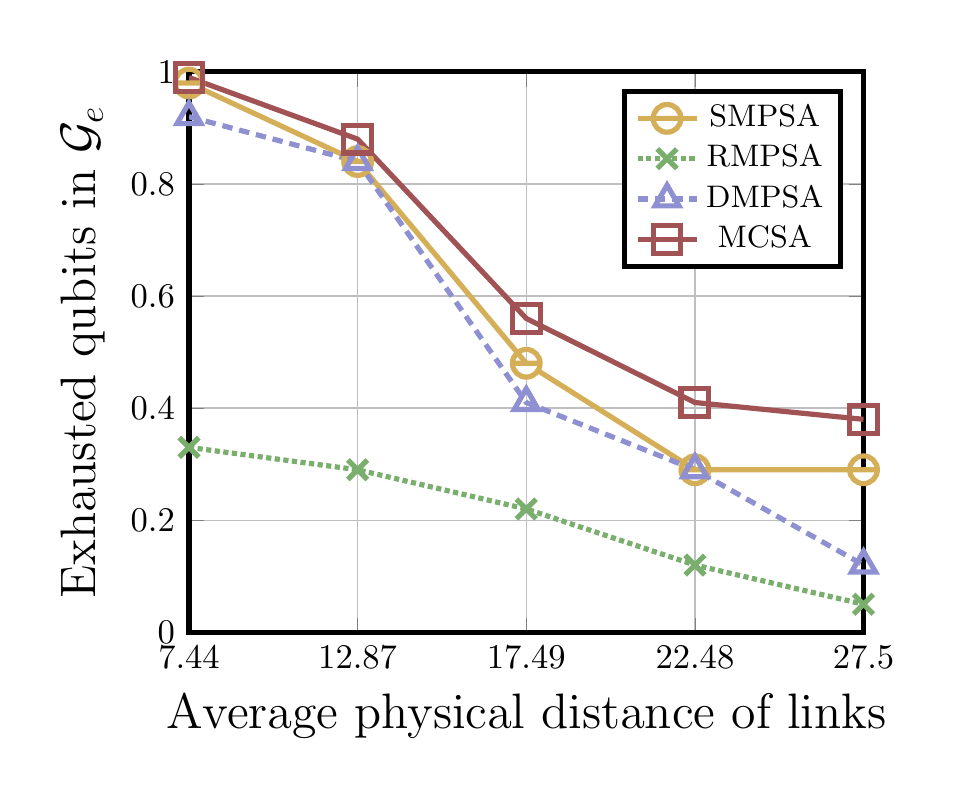}\label{fig:reseff_avg_phys}}
\subfigure[]{\includegraphics[width=4.2cm]{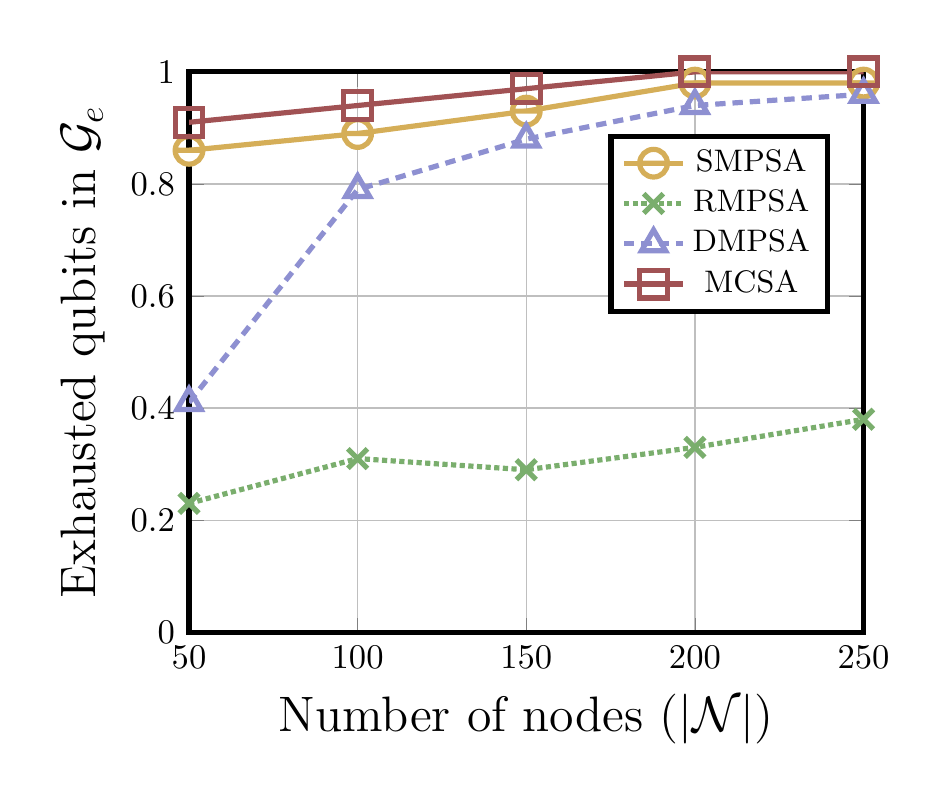} \label{fig:reseff_num_node}}
\subfigure[]{\includegraphics[width=4.2cm]{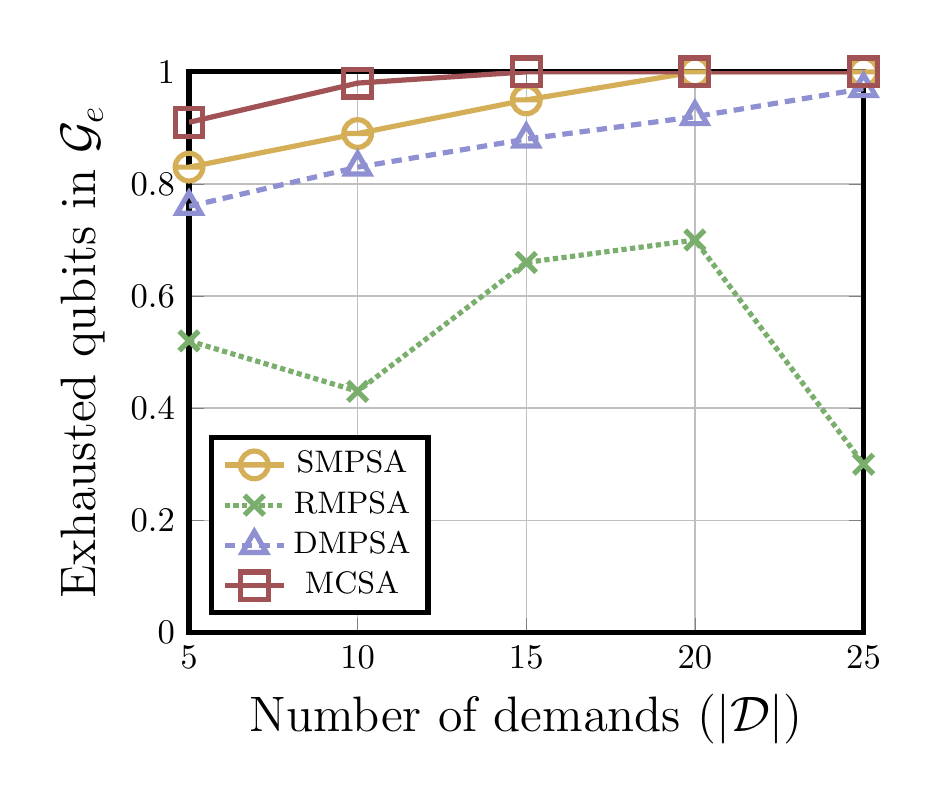}\label{fig:reseff_num_dem}}
\caption{The ratio of exhausted qubits in $\mathcal{G}_e$ when a) the average capacity per node ($\mathcal{C}_{avg}$) ranging from $3$ to $10.96$, b) average physical distance of links ranging from $7.44$ to $27.5$, c) number of nodes ($|\mathcal{N}|$) ranging from $50$ to $250$, and d) number of demands ($|\mathcal{D}|$) ranging from $5$ to $25$.}
\label{fig1:ph1-remres}
\end{figure}

Fig. \ref{fig:reseff_avg_cap} describes the ratio of exhausted qubits in $\mathcal{G}_e$ on varying the average capacity per node ($\mathcal{C}_{avg}$) from $3$ to $10.96$. We note that the quantum network is setup similar to Fig. \ref{fig:scheff_avg_cap}. As shown in the figure, the MCSA and SMPSA achieve the best resource utilization efficiency through obtaining the highest qubit depletion ratio on increasing the average capacity per node. We observe that the increase in the qubit depletion ratio is on account of the availability of a wider selection of paths for all the demands. On the basis thereof, we note that the exceptional performance of the MCSA and SMPSA is primarily due to the allocation of a greater number of paths in comparison to the RMPSA and DMPSA. Since minimum hop paths are preferred by the MCSA and SMPSA, each demand is allocated the minimum number of qubits thereby resulting in a higher number of path allocations, and better resource utilization efficiency.

In Fig. \ref{fig:reseff_avg_phys}, we assess the resource utilization performance of the proposed paradigms, by analyzing the ratio of exhausted qubits for networks with the average physical link distance varying from $7.44\ km$ to $27.5\ km$. In this case, we note that the quantum network is initialized similar to Fig. \ref{fig:scheff_avg_phys}. As demonstrated in the figure, the MCSA and SMPSA maintain the highest qubit depletion ratio with the increase of the average physical distance of links, in contrast to the RMPSA and DMPSA. The decreasing rate of the qubit depletion ratio is principally because of the reduction in the number of entangled qubits (due to the decrease of $p_s$). We observe once more, that the MCSA and SMPSA achieve the best resource utilization efficiency due to the allocation of the minimum number of qubits to a given ordered set of demands.

Fig. \ref{fig:reseff_num_node} demonstrates the resource utilization efficiency of the proposed algorithms by evaluating the qubit depletion ratio while varying the number of nodes (|$\mathcal{N}$|) from $50$ to $250$. The quantum network in this case, is instantiated identical to Fig. \ref{fig:scheff_num_node}. As illustrated within the figure, the MCSA and SMPSA once again achieve the highest qubit depletion ratio with the increase in the number of nodes. The increasing rate of the qubit depletion ratio is primarily due to the increased number of possible paths for all the demands. Moreover, the MCSA and SMPSA obtain the best resource utilization efficiency because of the prioritization of paths consuming minimum qubits. We note that the scheduling efficiency achieved by the MCSA and SMPSA further enhances the resource utilization performance due to the coupling between demand scheduling and qubit allocation for optimal routing.

In Fig. \ref{fig:reseff_num_dem}, we evaluate the resource utilization efficiency of the proposed algorithms, by investigating the qubit depletion ratio for networks with the number of demands ($|\mathcal{D}|$) increasing from $5$ to $25$. The quantum network is setup identical to Fig. \ref{fig:scheff_num_dem}. As shown in the figure, the MCSA and SMPSA maintain the highest qubit depletion ratio on the increase of the number of demands, in comparison to the RMPSA and DMPSA. Once again, we observe that the selection of minimum hop paths yields the highest resource utilization for the MCSA and SMPSA, among the proposed paradigms. Moreover, in the superior scheduling performance of the MCSA helps provide a greater resource utilization efficiency than that of the SMPSA.

\begin{figure}[t]
\center
\includegraphics[width=8.4cm]{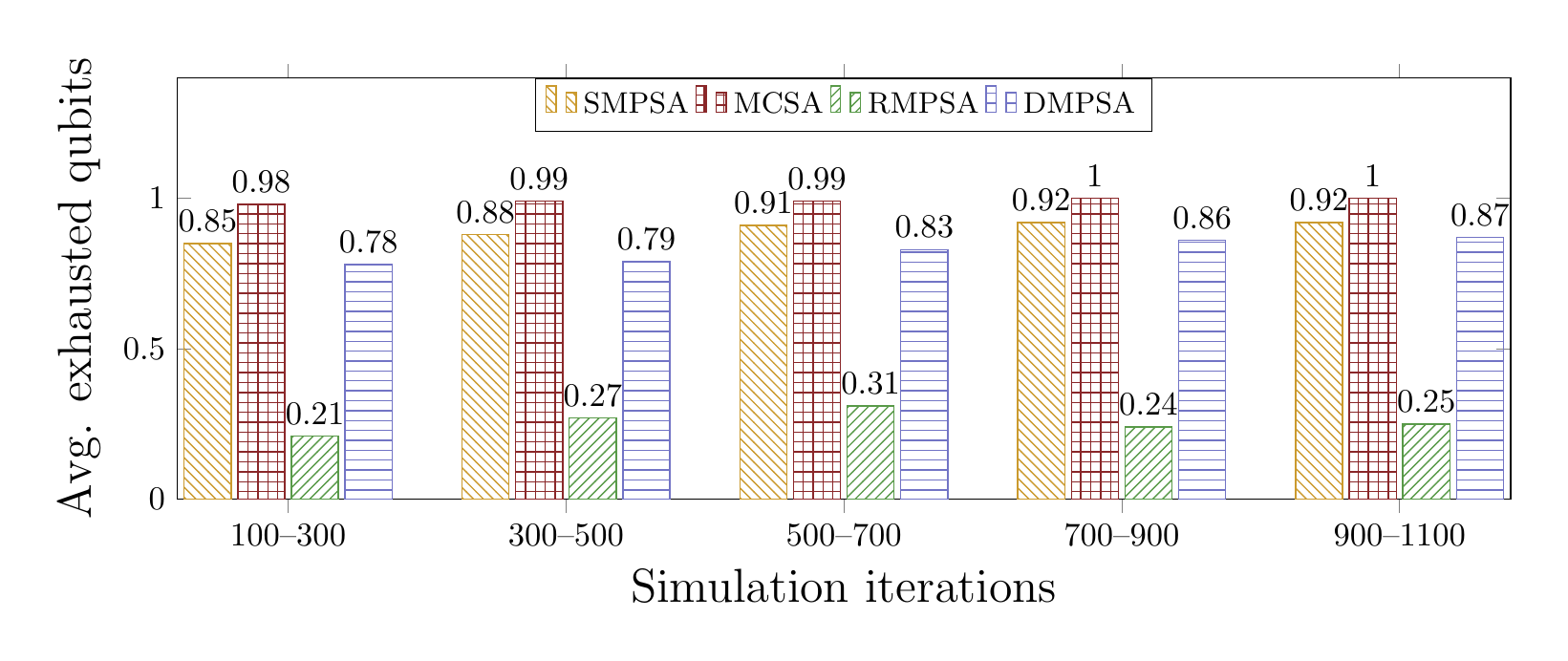}
\caption{The ratio of exhausted qubits in $\mathcal{G}_e$ for the number of simulation iterations varying from $100$ to $1100$.}
\label{fig:ph1-avgremres}
\end{figure}

Evidently, from the investigation of the proposed algorithms the MCSA and SMPSA achieve the best resource utilization efficiency in comparison to the RMPSA and DMPSA, for varying network conditions. We now analyze the overall resource utilization efficiency of the proposed algorithms for varying simulation iterations. We instantiate the quantum network identical to Fig. \ref{fig:ph1-avgkval}. As described in Fig. \ref{fig:ph1-avgremres}, the MCSA obtains the highest overall qubit depletion ratio for each of the iterations, in contrast to the SMPSA, RMPSA, and DMPSA. Furthermore, the MCSA and SMPSA obtain a stable qubit depletion ratio unlike the RMPSA. Thus, from the evaluation of the resource utilization efficiency of the proposed algorithms under dynamic network conditions, and the overall efficiency for a stable network, the MCSA and SMPSA attain the best performance.

\section{Conclusion and Future Works}\label{sec:conclusion}
In this work, we aim to optimize the traffic flexibility in quantum networks by studying the problem of $k$-entangled routing. We have presented a new entanglement routing model that advocates and enables $k$-entangled routing paths for all demands (source-destination pairs) in quantum networks. The proposed algorithms (SMPSA and MCSA) enhance the traffic flexibility for all demands in the network by a big margin compared to other methods.
To demonstrate the feasibility of the proposed model and algorithms, we have conducted not only experiments on the quantum network simulator i.e. NetSquid using real-world topologies and traffic matrices but also simulations to test the network's fidelity.

This work will the lay foundation for further research on ``traffic flexibility" of quantum networks via the $k$-entangled routing problem. We look forward to extending future research topics that hinge on the quantum routing problem with the following directions: 1) maximizing the fidelity of quantum networks, 2) enhancing the quantum networks' resiliency, and 3) designing efficient algorithms to maximize the entangled routing distribution rate for quantum networks.

\normalem

\bibliographystyle{IEEEtran}
\bibliography{bib-tn}
\end{document}